%-------------------------------------------------
% This is a template for a paper in a collected volume.
%-------------------------------------------------
%------
% Before you edit this file, please read
% Contribution_instructions.pdf
%------
\documentclass{book}
\usepackage[contrib,lang=british]{ems-book} %% change to `american' if you use American English

%------
% Include here your personal symbol definitions
% and macros as well as any extra LaTeX packages
% you need. Do not include any commands/packages
% that alter the layout of the page, e.g. height/width.
%------
% Do not include packages that are already loaded:
%   amsthm
%   amsmath
%   amssymb
%   enumitem
%   geometry
%   caption
%   graphicx
%   hyperref
%   fontenc
%   inputenc
% as well as:
%   array, babel, booktabs, cite, float, footmisc, kvoptions,
%   multicol, nag, newtxmath, newtxtext, pdf14, pdftexcmds,
%   ragged2e, url, xcolor, xpatch, zref-base
%------

\usepackage{amssymb}

\newcommand{\be}{\begin{equation}}
\newcommand{\ee}{\end{equation}}
\newcommand{\bfig}{\begin{figure}[htb]}
\newcommand{\efig}{\end{figure}}
\newcommand{\ii}{{\textrm i}}
\newcommand{\dd}{{\textrm d}}
\newcommand{\e}[1]{\,{\textrm e}^{#1}\,}
\newcommand{\Tr}{{\operatorname{Tr\,}}}
\newcommand{\sumtwo}[2]{\sum_{\substack{#1 \\ #2}}}
\newcommand{\sumthree}[3]{\sum_{\substack{#1 \\ #2 \\ #3}}}
\def\bbone{{\mathchoice {\textrm1 \mskip-4mu \textrm l} {\textrm1 \mskip-4mu \textrm l} {\textrm1 \mskip-4.5mu \textrm l} {\textrm1 \mskip-5mu \textrm l}}}

\definecolor{light-gold}{rgb}{1,0.95,0.65}

\renewcommand*\thesection{\arabic{section}}

\numberwithin{equation}{section}
\newtheorem{theorem}{Theorem}[section]
\newtheorem{proposition}[theorem]{Proposition}
\newtheorem{lemma}[theorem]{Lemma}
\newtheorem{corollary}[theorem]{Corollary}

\newcommand{\bbC}{{\mathbb C}}
\newcommand{\bbN}{{\mathbb N}}

\newcommand{\bbR}{{\mathbb R}}
\newcommand{\bbZ}{{\mathbb Z}}
\newcommand{\caB}{{\mathcal B}}
\newcommand{\caH}{\mathcal{H}}
\newcommand{\caR}{{\mathcal R}}
\newcommand{\bsi}{{\boldsymbol i}}
\newcommand{\bst}{{\boldsymbol t}}
\newcommand{\eps}{{\varepsilon}}

\newcommand{\RR}{\mathbb{R}} 
\newcommand{\ZZ}{\mathbb{Z}} 
\newcommand{\s}{\sigma}
\newcommand{\oo}{\infty}

\newcommand{\scr}[1]{^{\scriptscriptstyle (#1)}}

\makeatletter
  \def\tagform@#1{\maketag@@@{\scriptsize{(#1)}\@@italiccorr}}
\makeatother

\begin{document}
\mainmatter

%------
% Insert the title of your paper and (if necessary)
% a short title for the running head.
%------
\title{Reflection positivity and infrared bounds for quantum spin systems}
\titlemark{Reflection positivity and infrared bounds for quantum spin systems}

%------
% Insert full names of the authors.
% Add further authors as follows:
%  \emsauthor{2}{}{}
%  \emsauthor{3}{}{}
% etc.
% Abbreviate first names for the running head.
%------
\emsauthor{1}{Jakob E. Bj\"ornberg}{J.E.~Bj\"ornberg}
\emsauthor{2}{Daniel Ueltschi}{D.~Ueltschi}

%------
% Use \authormark if the list of authors is too
% long for the running head: \authormark{A.~Doe et al.}
%------

%------
% Add one \emsaffil and one \email for each author.
% NOTE: The address does NOT appear in the paper.
% It will probably be printed in an appendix.
%------
\emsaffil{1}{Department of Mathematics,
Chalmers University of Technology and the University of Gothenburg,
Sweden \email{jakob.bjornberg@gu.se}}

\emsaffil{2}{Department of Mathematics, University of Warwick,
Coventry, CV4 7AL, United Kingdom \email{daniel@ueltschi.org}}

%------
% Add MSC 2020 codes according to www.ams.org/msc/msc2020.html.
% Secondary codes (in square brackets) are optional.
%------
\classification[82B20, 82B26]{82B10}

%------
% Add a list of keywords.
%------
\keywords{Quantum spin systems, reflection positivity, phase transitions}

%------
% Insert your abstract.
%------
\begin{abstract}
The method of reflection positivity and infrared bounds allows to
prove the occurrence of phase transitions in systems with continuous
symmetries. We review the method in the context of quantum spin
systems. 
\smallskip

\noindent
\textit{Dedicated to Elliott Lieb on the occasion of his 90th birthday.}
\end{abstract}

\makecontribtitle

%------
% INSERT THE BODY OF THE PAPER HERE (except
% acknowledgments, funding info and bibliography)
%------

\section{Introduction}

\subsection{The quest for understanding phase transitions}

The description of phase transitions puzzled physicists at the dawn of
Statistical Physics. The first mathematical description goes back to
1924 with Einstein's description of Bose--Einstein condensation, but
its physical relevance was not recognised then. Peierls' argument for
the Ising model was published in 1936 but  did not settle the
question either. Legend has it that in Amsterdam in 1937, during the
conference celebrating the centenary of the birth of van der Waals, a
spirited debate took place regarding the validity of the Gibbs
formalism, and of the thermodynamic limit. This was supported by Born
and Uhlenbeck but strongly contested by Sommerfeld. Kramers, who was
chairman, apparently called for the question to be settled by a
vote{\dots} and the outcome was positive, narrowly! 

The setting has been clarified over the years and given precise
mathematical meaning. The main challenge now is to prove the occurrence
of phase transitions in specific systems. 

The first results dealt with systems with discrete symmetries such as
the Ising model. Regarding systems with continuous symmetries, Mermin
and Wagner proved a negative result in 1966, namely that the
Heisenberg model does not display spontaneous magnetisation in two
dimensions, at any positive temperature \cite{MW}. It took another
decade for the first positive result to appear, due to Fr\"ohlich,
Simon, and Spencer; they established in 1976 that the classical
Heisenberg model undergoes a phase transition in dimensions three and
higher \cite{FSS}.
Their work was inspired by ideas from quantum field field theory, specifically by the K\"all\'en--Lehmann representation of two-point Green functions in relativistic quantum field theory,
which suggested the right form of infrared bounds, and by reflection positivity, as
formulated in the works of Jost \cite{Jost}, Osterwalder and Schrader \cite{OS}, and Glaser \cite{Gla}. (Furthermore, bounds in \cite{GJ,Fro1} inspired the exponential infrared bounds proved in \cite{FSS}.)

%This groundbreaking work used ideas from field theory, namely the reflection positivity of Osterwalder and Schrader \cite{OS} and the Gaussian bound used in Glimm, Jaffe, and Spencer \cite{GJS}.

The extension of these ideas to quantum spin systems was achieved in
another groundbreaking article, by Dyson, Lieb, and Simon in 1978
\cite{DLS}. The method was then further extended and streamlined in
\cite{FILS1} and \cite{FILS2}. Further refinements include an
extension to the ground states in two dimensions \cite{NP} and
improved conditions that establish long-range order in the XY model in
two dimensions \cite{KLS1, KLS2, KK}.

It should be pointed out that the method does not apply to models where all coupling constants are positive \cite{Spe}. An important problem, which remains open to this day, is to prove spontaneous magnetisation or long-range order in the Heisenberg ferromagnet. 

Another extension of the method deals with "chessboard estimates",
proposed by Fr\"ohlich and Lieb \cite{FL} (they were partly motivated by \cite{GJS}).  Among many interesting
works that use these ideas, let us mention the flux phase problem
\cite{Lieb1, MN}; spin reflection positivity applied to Hubbard models
\cite{Lieb2, Tian, Tas}; itinerant electron models \cite{LN, Mac, ML};
high spin systems whose classical limit has long-range order
\cite{BC,BCS}; spin nematic phases \cite{AZ, TTI}; Néel order in the spin-1 model with biquadratic interactions \cite{Lees}; hard-core bosons
\cite{KLS2, ALSSY}; loop models associated with quantum spin systems \cite{Uel} (motivated by \cite{Toth1,AN}) and other loop models associated with classical spin systems \cite{QT}. Finally, let us mention an alternate extension of \cite{FSS} to quantum systems by Albert, Ferrari, Fr\"ohlich, and Schlein \cite{AFFS}.

There exist a few results about phase transitions in systems with continuous symmetry that were proved with different methods, see \cite{BHH,GS}. But reflection positivity and infrared bound remains the most prolific method and the only one that has been applied to quantum systems.

A beautiful account of the method of reflection positivity in statistical mechanics has been
written by Biskup \cite{Bis}. It is restricted to classical
systems, so the present survey can be seen as a
complement, dealing with the quantum counterparts. Nevertheless, we
have attempted to write a self-contained 
pedagogical account, that encompasses many of the results on
phase transitions for quantum
spin systems. 

The handwritten notes of T\'oth for his Prague lectures give a clear account of the method \cite{Toth2}. And 
an extensive overview, which retraces the origin of the key ideas, can be found in the handwritten notes of Fr\"ohlich for his Vienna lectures \cite{Fro2}.

\subsection{Organisation of the survey}

The setting for quantum spin systems is introduced in Section 
\ref{sec setting}.  We recall  the existence of the
infinite-volume limit of the free energy 
(Theorem \ref{thm thermo lim}), and discuss non-differentiability of
the free energy and how it relates to long-range order.
%state one main outcome of reflection positivity, namely that the free energy is
%not differentiable with respect to the magnetic field at low
%temperatures, in Theorem \ref{thm non diff}. 
Subsequently, the chapter is organised as follows:
\begin{itemize}[leftmargin=*]
\item The existence of long-range order is stated in Section 
\ref{sec LRO}. 
%This is the second main outcome of the method, and we use it to prove
%Theorem \ref{thm non diff}. 
For this we consider the case of positive temperature in dimensions 3
and higher as well as the ground state of
two-dimensional systems. 
\item The case of nearest-neighbour interactions is covered in Theorem \ref{thm LRO nn}. As discussed after the theorem, long-range order has been proved for all $d \geq 2$ and all $S \in \frac12 \bbN$, except for the case $d=2$ and $S=\frac12$ where it is restricted to models close to XY.
\item The proof of long-range order in turn relies on an "infrared
  bound" on correlations which we state in Section \ref{sec IRB}. This involves transferring the bound from the Duhamel correlation function to the normal correlation function.
\item We establish reflection positivity of our models in Section \ref{sec RP} and use it to prove the infrared bound on the Duhamel correlation function.
\item The appendix contains correlation inequalities for quantum spin systems that we need in various places.
\end{itemize}

\section{Setting and results}
\label{sec setting}

\iffalse

We now introduce the setting for both classical and quantum spin systems.
\medskip

\noindent
{\it Classical spin systems.}
Let $\Sigma \subset \bbR^\nu$ the space for one classical spin, and $\Sigma_\Lambda = \Sigma^\Lambda$ the space of spin configurations. We denote its elements as $\sigma_\Lambda = (\sigma_x)_{x\in\Lambda}$. The energy of the system is given by the Hamiltonian
\be
\label{def ham cl}
H_{\Lambda,h}(\sigma) = -\sum_{i=1}^\nu \sum_{x,y\in\Lambda} J\scr{i}(x-y) \sigma_x\scr{i} \sigma_y\scr{i} - h \sum_{x \in \Lambda} \sigma_x\scr1.
\ee
Here, $J\scr{i}(x-y)$ denotes the coupling parameter for the $i$th components of the spins at sites $x,y \in \Lambda$. Let $\mu$ be the "a priori" measure on $\Sigma$, and let $\mu_\Lambda$ be the product measure on $\Sigma_\Lambda$
\be
\mu_\Lambda(\dd \sigma_\Lambda) = \prod_{x\in\Lambda} \mu(\dd \sigma_x).
\ee
The partition function is
\be
\label{def partf class}
Z(\Lambda,\beta,h) = \int\mu_\Lambda(\dd\sigma_\Lambda) \e{-\beta H_{\Lambda,h}(\sigma_\Lambda)}.
\ee

\bigskip

\noindent
{\it Quantum spin systems.}
\fi

The domain of the system is a finite subset $\Lambda \subset
\bbZ^d$. The state space is the Hilbert space $\caH_\Lambda =
\otimes_{x\in\Lambda} \bbC^n$ where $n = 2S+1 =
2,3,4,\dots$. Equivalently, $\caH_\Lambda$ can be defined as the
linear span of the space of classical configurations, $\caH_\Lambda =
{\textrm{span}} (\Sigma^\Lambda)$, where $\Sigma = \{-S,-S+1,\dots,S\}$. We
consider the spin operators $(S_x\scr{i})_{x\in\Lambda}^{i=1,2,3}$
that satisfy the relations: 
\be
\label{spin relations}
\begin{split}
&[S_x\scr1, S_y\scr2] = \ii S_x\scr3 \delta_{x,y}, \quad [S_x\scr2, S_y\scr3] = \ii S_x\scr1 \delta_{x,y}, \quad [S_x\scr3, S_y\scr1] = \ii S_x\scr2 \delta_{x,y}, \\
& \bigl( S_x\scr1 \bigr)^2 + \bigl( S_x\scr2 \bigr)^2 + \bigl( S_x\scr3 \bigr)^2 = S(S+1) {\textrm{Id}},
\end{split}
\ee
for all $x,y\in\Lambda$. It can be shown that all these operators have eigenvalues $\{-S,\dots,S\}$. For $S=\frac12$, these operators are given by (half) the Pauli matrices in the basis where $\{S_x\scr3\}$ are diagonal, namely
\be
\begin{split}
&S_x\scr1 = \tfrac12 \bigl( \begin{smallmatrix} 0 & 1 \\ 1 & 0 \end{smallmatrix} \bigr) \otimes {\textrm{Id}}_{\Lambda \setminus \{x\}}, \\
&S_x\scr2 = \tfrac12 \bigl( \begin{smallmatrix} 0 & -\ii \\ \ii & 0 \end{smallmatrix} \bigr) \otimes {\textrm{Id}}_{\Lambda \setminus \{x\}}, \\
&S_x\scr3 = \tfrac12 \bigl( \begin{smallmatrix} 1 & 0 \\ 0 & -1 \end{smallmatrix} \bigr) \otimes {\textrm{Id}}_{\Lambda \setminus \{x\}}.
\end{split}
\ee
These expressions are adapted to the tensor product structure
$\caH_\Lambda = \otimes_{x\in\Lambda} \bbC^n$. Using the
interpretation $\caH_\Lambda = {\textrm{span}} (\Sigma^\Lambda)$ instead, we can
define the spin operators as follows. Given $\sigma_\Lambda \in \Sigma^\Lambda$, let $\sigma_\Lambda^{(x)}$
denote the configuration that is equal to $\sigma_\Lambda$, except at
$x$ where the value has been flipped; then 
\be
\begin{split}
&S_x\scr1 |\sigma_\Lambda\rangle = |\sigma_\Lambda^{(x)} \rangle, \\
&S_x\scr2 |\sigma_\Lambda\rangle = \ii \sigma_x |\sigma_\Lambda^{(x)} \rangle, \\
&S_x\scr3 |\sigma_\Lambda\rangle = \sigma_x |\sigma_\Lambda \rangle.
\end{split}
\ee
Here $|\sigma_\Lambda\rangle$ is the vector in $\caH_\Lambda$ that corresponds to $\sigma_\Lambda$.
The Hamiltonian of the system is the operator
\be
\label{def ham qu}
H_{\Lambda,h} = -\sum_{i=1}^3 \sum_{x,y\in\Lambda} J_{x-y}\scr{i} S_x\scr{i} S_y\scr{i} - h \sum_{x \in \Lambda} S_x\scr3
\ee
and the partition function is
\be
\label{def partf qu}
Z(\Lambda,\beta,h) = \Tr \e{-\beta H_{\Lambda,h}}.
\ee
The functions $J_{x-y}\scr{i}$ are called coupling parameters or
coupling constants. We always assume that they are real and symmetric, $J_x\scr{i} = J_{-x}\scr{i} \in \bbR$ for all $i \in \{1,2,3\}$ and $x \in \bbZ^d$.

\medskip

We define the finite-volume free energy by
\be
f_\Lambda(\beta,h) = -\frac1{\beta |\Lambda|} \log Z(\Lambda,\beta,h),
\qquad f_\Lambda(\oo,h)=\lim_{\beta\to\oo} f_\Lambda(\beta,h).
\ee
The limit $\beta\to\oo$ exists and it corresponds to a trace in the eigenspace
for $H_{\Lambda,h}$ with lowest eigenvalue (the ground-state energy).
As is well-known, we can take the limit of large volumes and we obtain
a thermodynamic potential. This is a pillar of Statistical
Mechanics. For this, we recall the notion of "van Hove sequences". We
say that the sequence of finite domains $(\Lambda_n)$ tends to
$\bbZ^d$ \emph{in the sense of van Hove},
written $\Lambda_n\Uparrow\ZZ^d$,  if 
\begin{itemize}
\item[(i)] it is increasing, i.e.\ $\Lambda_n \subset \Lambda_{n+1}$;
\item[(ii)] it invades $\bbZ^d$, i.e.\ $\cup_n \Lambda_n = \bbZ^d$;
\item[(iii)] its ratio boundary/volume tends to zero, 
i.e.\ $\frac{|\{ x \in \Lambda_n : 
{\textrm{dist}}(x,\Lambda_n^{\textrm c}) = 1\}|}{|\Lambda_n|} \to 0$.
\end{itemize}

\begin{theorem}
\label{thm thermo lim}
Assume that
\[
\sum_{x\in\bbZ^d} |J_x\scr{i}| < \infty
\]
for all $i \in\{1,2,3\}$. Then there exists a function 
$f(\beta,h)$ where $0\leq\beta\leq \oo$ and $h\in\bbR$, such that for
any van Hove sequence $(\Lambda_n)$, we have 
\[
f(\beta,h) = \lim_{n\to\infty} f_{\Lambda_n}(\beta,h).
\]
Convergence is locally uniform. Further, the function $f(\beta,h)$ is
%jointly concave in $(\beta, \beta h)$, in particular it is
concave and even in $h$. 
\end{theorem}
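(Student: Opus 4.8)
The plan is to separate the three claims, with existence of the limit the substantial one. Write $M=\sum_{x\in\Lambda}S_x\scr3$, so that $H_{\Lambda,h}=H_{\Lambda,0}-hM$. Concavity in $h$ is soft: the map $t\mapsto\log\Tr\e{A+tB}$ is convex for Hermitian $A,B$ (its second derivative is a nonnegative Duhamel-type variance), so each $f_\Lambda(\beta,\cdot)=-\tfrac{1}{\beta|\Lambda|}\log\Tr\e{-\beta H_{\Lambda,0}+\beta hM}$ is concave, and concavity passes to pointwise limits. Evenness in $h$ I would obtain from the unitary $U=\prod_{x\in\Lambda}\e{\ii\pi S_x\scr1}$ implementing the $\pi$-rotation about the first axis: it fixes $S_x\scr1$ and flips $S_x\scr2,S_x\scr3$, whence $U H_{\Lambda,0}U^*=H_{\Lambda,0}$ and $U M U^*=-M$, so $U H_{\Lambda,h}U^*=H_{\Lambda,-h}$ and cyclicity of the trace gives $Z(\Lambda,\beta,h)=Z(\Lambda,\beta,-h)$.

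For the limit I would reduce to finite range by a uniform approximation. Peierls--Bogoliubov applied in both directions gives $|\log\Tr\e{A}-\log\Tr\e{B}|\le\|A-B\|$ for Hermitian $A,B$; with $A=-\beta H_{\Lambda,h}[J]$ and $B=-\beta H_{\Lambda,h}[J^{(R)}]$, where $J^{(R)}$ truncates $J$ to $|x|\le R$, and $\|S_x\scr{i}\|\le S$, this yields $|f_\Lambda[J]-f_\Lambda[J^{(R)}]|\le S^2\sum_i\sum_{|x|>R}|J_x\scr{i}|$, a bound independent of $\Lambda$ that vanishes as $R\to\infty$ by summability. Since a sequence uniformly approximated by convergent sequences converges, it suffices to treat each finite-range $J^{(R)}$. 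There I would first take cubes $\Lambda_k$ of side $L_k=2^k$: partitioning $\Lambda_{k+1}$ into $2^d$ translates of $\Lambda_k$ and denoting the omitted cross-interaction by $W$, only bonds within distance $R$ of the internal faces contribute, so $\|W\|=O(L_{k+1}^{d-1})$; Peierls--Bogoliubov then gives $\bigl||\Lambda_{k+1}|^{-1}\log Z(\Lambda_{k+1})-|\Lambda_k|^{-1}\log Z(\Lambda_k)\bigr|=O(2^{-k})$, so the per-site free energy is Cauchy and converges along cubes.

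A standard covering argument promotes this to arbitrary van Hove sequences: tile $\Lambda_n$ by cubes of a fixed large side $\ell$, use Peierls--Bogoliubov to switch off the inter-cube bonds at surface cost, discard the boundary cubes whose relative number vanishes by condition (iii), and let $n\to\infty$ and then $\ell\to\infty$. The same scheme applies to the ground-state energy per site, covering $\beta=\infty$. Local uniform convergence follows from equicontinuity: $|\partial_h f_\Lambda|=|\Lambda|^{-1}|\langle M\rangle|\le S$, and on compact subsets of $\beta\in(0,\infty)$ one has $|\partial_\beta f_\Lambda|\le|\Lambda|^{-1}\bigl(\beta^{-2}|\log Z|+\beta^{-1}\|H_{\Lambda,h}\|\bigr)$, which is bounded uniformly in $\Lambda$ by summability of $J$; Arzel\`a--Ascoli together with pointwise convergence then gives locally uniform convergence.

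I expect the real obstacle to be controlling the interaction across a partition in the long-range case. A direct subadditivity estimate fails because $\sum_{x\in\Lambda_1,\,y\in\Lambda_2}|J_{x-y}|$ scales like the volume rather than the surface, unless one assumes the stronger first-moment bound $\sum_x|x|\,|J_x|<\infty$. The truncation step is exactly what sidesteps this, extracting the conclusion from the finite-range case using only absolute summability; making that reduction clean, and the van Hove count rigorous, is where the work concentrates.
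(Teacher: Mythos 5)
Your proposal is correct, but note that the paper does not prove Theorem \ref{thm thermo lim} at all — it defers to \cite{Rue,FV} — and your argument (Peierls--Bogoliubov truncation to finite range, subadditivity along dyadic cubes, van Hove tiling, concavity from convexity of $t\mapsto\log\Tr\e{A+tB}$, evenness from the $\pi$-rotation about the first spin axis) is essentially the standard proof found in those references. The two points you gloss over — the van Hove counting via condition (iii) and uniformity of convergence near $\beta=\infty$ (where $|f_\Lambda(\beta,h)-f_\Lambda(\oo,h)|\leq \log(2S+1)/\beta$ uniformly in $\Lambda$) — both go through routinely.
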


We refer to \cite{Rue,FV} for the proof of this theorem. 
For finite $\beta$,  the function $\beta f(\beta,h)$ is
jointly concave in $(\beta, \beta h)$ (but here we only use the
concavity in $h$).  
It is also
well-known that $f(\beta,h)$ is smooth when $\beta$ is small --- it is
actually analytic. 

In what follows we need to consider periodic boundary
conditions. Given $\ell \in \bbN$, let $\Lambda_\ell =
\{0,1,\dots,\ell-1\}^d$, and let the Hamiltonian
$H_{\Lambda_\ell,h}^{\textrm{per}}$ be given by Eq.\ \eqref{def ham qu}, but
with coupling parameters replaced by the following periodised ones: 
\be\label{eq:J-per}
J_{x,{\textrm{per}}}\scr{i} = \sum_{z \in \bbZ^d} J_{x + \ell z}\scr{i}.
\ee
We can define the periodised partition function $Z^{\textrm{per}}(\Lambda_\ell, \beta, h)$ accordingly, and the free energy
\be
f^{\textrm{per}}_{\Lambda_\ell}(\beta,h) = -\frac1{\beta \ell^d} \log Z^{\textrm{per}}(\Lambda_\ell, \beta, h).
\ee
As $\ell \to \infty$, these free energies converge to the free
energies $f(\beta,h)$ of Theorem \ref{thm thermo lim}.

We introduce the finite-volume equilibrium states
\be
\langle \cdot \rangle_{\Lambda, \beta, h}=
\frac{\Tr[\cdot \e{-\beta H_{\Lambda,h}}]}
{Z_{\Lambda,\beta,h}},
\qquad
\langle\cdot\rangle_{\Lambda,\oo,h}=\lim_{\beta\to\oo}
\langle\cdot\rangle_{\Lambda,\beta,h}. 
\ee
We also consider the states $\langle \cdot \rangle^{\textrm{per}}_{\Lambda_\ell, \beta, h}$ with periodic boundary conditions, where we use $H^{\textrm{per}}_{\Lambda_\ell,h}$ instead of $H_{\Lambda_\ell,h}$.
In Section \ref{sec LRO} we make several assumptions on the coupling
constants and we show that the system then exhibits \textit{long-range
  order} at low temperature, in the sense that there exists a lower
bound $c$ such that 
\be
\label{def LRO}
\frac1{|\Lambda|^2} \sum_{x,y \in \Lambda} \langle S_x\scr3 S_y\scr3 \rangle_{\Lambda,\beta,0} \geq c > 0,
\ee
where $c$ is positive uniformly in the volume $\Lambda$. As for the
domains $\Lambda$, the statement is relevant if it holds for all
domains in a van Hove sequence. 

In order to motivate the importance of this property, we show that it implies the occurrence of a first-order phase transition as $h$ crosses 0. This also implies that there exist many distinct Gibbs
states at $(\beta,0)$.
%Unfortunately, the following theorem involves conditions on the coupling constants that are \textit{orthogonal} to the conditions under which long-range order is proved. We do not know of a proof of this theorem in the case where $J_x\scr1$ and $J_x\scr2$ have different signs, which is the situation of Section \ref{sec LRO}.

\begin{theorem}
\label{thm non diff}
We assume that the system displays long-range order in the form of Eq.\ \eqref{def LRO}.
%We also assume that the coupling constants satisfy $J_x\scr1 = J_x\scr2$ for all $x \in \bbZ^d$.
Then
\[
\frac\partial{\partial h} f(\beta,h) \Big|_{h=0-} \; > \; 0 
\; > \; \frac\partial{\partial h} f(\beta,h) \Big|_{h=0+}.
\]
\end{theorem}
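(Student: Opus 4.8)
The plan is to express the field-derivative of the free energy as a magnetisation density and then to exploit the hypothesis $J_x\scr1 = J_x\scr2$ to reduce the whole estimate to a purely classical, one-dimensional large-deviation statement. First I would record the differentiation formula: writing $N = \sum_{x\in\Lambda} S_x\scr3$, the Duhamel expansion of $\partial_h \e{-\beta H_{\Lambda,h}}$ together with cyclicity of the trace gives
\[
\frac{\partial}{\partial h} f_\Lambda(\beta,h) = -\frac1{|\Lambda|} \langle N \rangle_{\Lambda,\beta,h}.
\]
By Theorem \ref{thm thermo lim} the limit $f(\beta,h)$ is concave and even in $h$, so its one-sided derivatives at $h=0$ exist and obey $\partial_h f(\beta,0-) = -\partial_h f(\beta,0+) \geq \partial_h f(\beta,0+)$; the content of the theorem is therefore to exhibit a strict, quantitative gap. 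I will prove that $-\partial_h f(\beta,0+) \geq \sqrt c$, which yields both stated inequalities at once.

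The key structural point is that $J_x\scr1 = J_x\scr2$ makes $H_{\Lambda,0}$ commute with $N$: the combinations $S_x\scr1 S_y\scr1 + S_x\scr2 S_y\scr2$ and $S_x\scr3 S_y\scr3$ are both invariant under rotations about the third axis, whose generator is $N$. Hence $H_{\Lambda,0}$ and $N$ are simultaneously diagonalisable, and writing $\mu(m) = \Tr_{N=m} \e{-\beta H_{\Lambda,0}}$ one obtains the exact identity
\[
\frac{Z(\Lambda,\beta,h)}{Z(\Lambda,\beta,0)} = \frac1{Z(\Lambda,\beta,0)} \sum_m \e{\beta h m} \mu(m) = \mathbb{E}_\Lambda\bigl[ \e{\beta h N} \bigr],
\]
where $\mathbb{E}_\Lambda$ denotes expectation with respect to the law $P_\Lambda(m) = \mu(m)/Z(\Lambda,\beta,0)$ of $M_\Lambda = N/|\Lambda|$ in the state $\langle \cdot \rangle_{\Lambda,\beta,0}$. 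The rotation by $\pi$ about the first axis preserves $H_{\Lambda,0}$ and sends $N \mapsto -N$, so $P_\Lambda$ is symmetric, and the long-range order hypothesis reads precisely $\mathbb{E}_\Lambda[M_\Lambda^2] \geq c$.

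It then remains to run an elementary classical estimate. Since $|M_\Lambda| \leq S$ and $P_\Lambda$ is symmetric, truncating the second moment gives $P_\Lambda(M_\Lambda \geq a) \geq (c-a^2)/(2S^2) > 0$ for every $0 < a < \sqrt c$. Keeping only this event in the moment generating function,
\[
f_\Lambda(\beta,0) - f_\Lambda(\beta,h) = \frac1{\beta|\Lambda|} \log \mathbb{E}_\Lambda\bigl[ \e{\beta h |\Lambda| M_\Lambda} \bigr] \geq h a + \frac1{\beta|\Lambda|} \log\frac{c-a^2}{2S^2}.
\]
Letting $\Lambda \Uparrow \bbZ^d$ along the van Hove sequence (so $f_\Lambda \to f$ by Theorem \ref{thm thermo lim}) and then $a \uparrow \sqrt c$ yields $f(\beta,0) - f(\beta,h) \geq h\sqrt c$ for all $h > 0$. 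Dividing by $h$ and using that $h \mapsto f(\beta,0) - f(\beta,h)$ is convex with value $0$ at $h=0$ gives $-\partial_h f(\beta,0+) \geq \sqrt c$, which is the desired bound.

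I expect the reduction in the second paragraph to be the crux: it is what turns $Z(h)/Z(0)$ into a genuine expectation and makes $M_\Lambda$ an honest random variable, and it is exactly where $J_x\scr1 = J_x\scr2$ is indispensable. Without that hypothesis one would only have Golden--Thompson-type inequalities, which point the wrong way; this matches the authors' remark that no proof is known when $J_x\scr1$ and $J_x\scr2$ differ in sign. Everything else is the standard convexity-and-second-moment packaging, together with the interchange of the limits $\Lambda \Uparrow \bbZ^d$ and $h \to 0+$, which is controlled here because the lower bound $f(\beta,0)-f(\beta,h) \geq h\sqrt c$ holds uniformly before taking either limit.
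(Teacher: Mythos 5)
Your proof is correct, and it takes a genuinely different route from the paper's. Both arguments pivot on the same two structural facts — that $J_x\scr1 = J_x\scr2$ makes $M_\Lambda = \sum_x S_x\scr3$ commute with $H_{\Lambda,0}$, and that the rotation by $\pi$ about the first spin axis flips $M_\Lambda \mapsto -M_\Lambda$ while preserving $H_{\Lambda,0}$ — but they exploit them differently. The paper introduces an auxiliary Hamiltonian $\tilde H_{\Lambda,h} = H_{\Lambda,0} - h|M_\Lambda|$, shows via the sandwich $\Tr \e{-\beta H_{\Lambda,0}+\beta h M_\Lambda} \leq \Tr \e{-\beta H_{\Lambda,0}+\beta h |M_\Lambda|} \leq 2\,\Tr \e{-\beta H_{\Lambda,0}+\beta h M_\Lambda}$ that its free energy has the same infinite-volume limit $f(\beta,h)$ for $h \geq 0$, and then bounds $\partial_h f(\beta,0+)$ by $\liminf_\Lambda \langle -|M_\Lambda|/|\Lambda| \rangle_{\Lambda,\beta,0} \leq -c/S$, using concavity together with the exchange-of-limits inequality $\sup_h \liminf_\Lambda \leq \liminf_\Lambda \sup_h$ and the operator inequality $M_\Lambda^2 \leq |\Lambda| S\, |M_\Lambda|$. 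You instead stay with the original Hamiltonian and use commutativity to turn $Z(\Lambda,\beta,h)/Z(\Lambda,\beta,0)$ into an honest classical moment generating function, so that long-range order becomes a second-moment bound on a bounded symmetric random variable and a Chernoff-type truncation yields the finite-volume estimate $f_\Lambda(\beta,0)-f_\Lambda(\beta,h) \geq ha + O(1/|\Lambda|)$, which passes directly to the limit. Your route buys two things: it avoids both the auxiliary $|M_\Lambda|$-model and the sup/liminf exchange (the uniform finite-volume bound makes the limit interchange trivial), and it gives the quantitatively stronger conclusion $-\partial_h f(\beta,0+) \geq \sqrt{c}$ rather than the paper's $c/S$ (stronger since $c \leq S^2$). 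What the paper's formulation buys is flexibility: the derivative-of-the-modified-free-energy mechanism does not require interpreting $Z(h)/Z(0)$ as an expectation, only that $|M_\Lambda|$ commute with $H_{\Lambda,0}$, so it is the natural template when one perturbs by operators other than the magnetisation itself. Your closing observation is also accurate: the commutation step is exactly where $J_x\scr1 = J_x\scr2$ enters, and without it Golden--Thompson-type bounds point in the useless direction, consistent with the paper's remark that no proof is known when $J_x\scr1$ and $J_x\scr2$ have different signs.
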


%Notice that we did not assume anything on $J_x\scr3$; but these interactions are presumably larger than $J_x\scr1, J_x\scr2$, in order that long-range order in the 3rd direction of spin be possible.

\begin{proof}
We first give a simplified proof in the case where $[H_{\Lambda,h},M_\Lambda] = 0$, where $M_\Lambda$ is the magnetisation operator
\be
M_\Lambda=\sum_{x\in\Lambda} S\scr3_x.
\ee
For the general case, we will use a result of Koma and Tasaki \cite{KT}.

Let $|M_\Lambda|$ be the unique positive semi-definite square root of $M_\Lambda^2$.
We have $|M_\Lambda| \leq |\Lambda| S \, {\textrm{Id}}$, so that $M_\Lambda^2 \leq |\Lambda| S |M_\Lambda|$. Since Gibbs states are positive linear functionals, we get
\be
\Bigl\langle \frac{M_\Lambda^2}{|\Lambda|^2} \Bigr\rangle_{\Lambda,\beta,0} \leq S \Bigl\langle \frac{|M_\Lambda|}{|\Lambda|} \Bigr\rangle_{\Lambda,\beta,0}.
\ee
Long-range order implies that $\frac1{|\Lambda|^2} \langle M_{\Lambda}^2 \rangle_{\Lambda,\beta,0} \geq c$, so the right side above is positive.

In order to get an inequality for the derivative of the free energy, let us introduce $\tilde f_\Lambda(\beta,h)$ to be the free energy of the model with Hamiltonian
\be
\tilde H_{\Lambda,h} = -\sum_{i=1}^3 \sum_{x,y\in\Lambda}
J_{x-y}\scr{i} S_x\scr{i} S_y\scr{i} - 
h |M_\Lambda|.
\ee
We now check that $\tilde f_\Lambda(\beta,h)$
converges as $\Lambda \Uparrow \bbZ^d$ to the free energy $f(\beta,h)$
for $h\geq0$. For this, notice that $M_\Lambda$ (and $|M_\Lambda|$) commute with $H_{\Lambda,0} = \tilde H_{\Lambda,0}$. For $h\geq 0$ we have the inequalities (for the second one, observe that the spectrum of $M_\Lambda$ is symmetric around 0)
\be
\Tr \e{-\beta H_{\Lambda,\beta,0} + \beta h M_\Lambda} \leq \Tr \e{-\beta H_{\Lambda,\beta,0} + \beta h |M_\Lambda|} \leq 2 \, \Tr \e{-\beta H_{\Lambda,\beta,0} + \beta h M_\Lambda}.
\ee
Taking the logarithm and dividing by $\beta|\Lambda|$, and
taking the relevant limits, we get that $f$ and $\tilde f$ are equal.

We now use the concavity in $h$ of $\tilde f_\Lambda$ and the fact
that 
$\sup_n (\liminf_m a_{m,n}) \leq \liminf_m (\sup_n a_{m,n})$ and we get
\be
\label{chouettes inegalites}
\begin{split}
\frac\partial{\partial h} f(\beta,h) \Big|_{h=0+} \!\!  &= \sup_{h>0} \frac{f(\beta,h) - f(\beta,0)}h = \sup_{h>0} \liminf_{\Lambda \Uparrow \bbZ^d} \frac{\tilde f_\Lambda(\beta,h) - \tilde f_\Lambda(\beta,0)}h \\
&\leq \liminf_{\Lambda \Uparrow \bbZ^d} \sup_{h>0} \frac{\tilde f_\Lambda(\beta,h) - \tilde f_\Lambda(\beta,0)}h = \liminf_{\Lambda \Uparrow \bbZ^d} \frac\partial{\partial h} \tilde f_\Lambda(\beta,h) \Big|_{h=0} \\
&= \liminf_{\Lambda \Uparrow \bbZ^d} \Bigl\langle -\frac{|M_\Lambda|}{|\Lambda|} \Bigr\rangle_{\Lambda,\beta,0}.
\end{split}
\ee
The last expectation is with respect to the Gibbs state with Hamiltonian $\tilde H_{\Lambda,0} = H_{\Lambda,0}$. 
The right side
is positive and $\frac\partial{\partial h} f(\beta,h)
\big|_{h=0+}$ is indeed negative. Since $f$ is even in $h$ we get the other inequality as well.

When $M_\Lambda$ does not commute with the Hamiltonian, it does not seem possible to show that $f$ and $\tilde f$ are equal. But since right-derivatives of concave functions are right-continuous, we can proceed as above and get
\be
\frac\partial{\partial h} f(\beta,h) \Big|_{h=0+} = \lim_{h' \to 0+} \frac\partial{\partial h} f(\beta,h) \Big|_{h=h'+} \leq \lim_{h' \to 0+} \liminf_{\Lambda \Uparrow \bbZ^d} \Bigl\langle -\frac{M_\Lambda}{|\Lambda|} \Bigr\rangle_{\Lambda,\beta,h'}.
\ee
Koma and Tasaki \cite{KT} have proved that long-range order (in the sense of \eqref{def LRO}) implies that the right side is strictly negative.
\end{proof}

%The stated assumptions on the coupling parameters ensure that the model is reflection positive, see Section \ref{sec RP}.
%We  note that we can use spin rotations to extend the results  to models where the coupling parameters satisfy different inequalities than $J\scr3(x)\geq J\scr1(x)\geq -J\scr2(x)\geq 0$. 
We now discuss spin rotations. They show that Hamiltonians with different couplings are related by a unitary transformation, which allows to make assumptions on the couplings without loss of generality.
The following lemma applies to spin operators in $\bbC^{2S+1}$, and
immediately extends to tensor products. 
Given $\vec a \in \bbR^{3}$, let
\be
S\scr{\vec a} = \vec a \cdot \vec S = a_{1} S\scr1 + a_{2} S\scr2 + a_{3} S\scr3.
\ee
By linearity, the commutation relations \eqref{spin relations} generalise as
\be
\label{gen Pauli commutations}
[S\scr{\vec a}, S\scr{\vec b}] = \ii S\scr{\vec a \times \vec b}.
\ee
Finally, let $R_{\vec a} \vec b$ denote the vector $\vec b$ rotated around $\vec a$ by the angle $\|\vec a\|$.

\begin{lemma}
\label{lem spin rotations}
\[
\e{-\ii S\scr{\vec a}} S\scr{\vec b} \e{\ii S\scr{\vec a}} = S\scr{R_{\vec a} \vec b}.
\]
\end{lemma}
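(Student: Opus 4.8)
The plan is to differentiate with respect to a real parameter and reduce the operator identity to a classical ODE for a vector in $\bbR^3$. Introduce the one-parameter family
\be
F(t) = \e{-\ii t S\scr{\vec a}} S\scr{\vec b} \e{\ii t S\scr{\vec a}}, \qquad t\in\bbR,
\ee
so that $F(0) = S\scr{\vec b}$ and the claim is the special value $F(1) = S\scr{R_{\vec a}\vec b}$. Everything takes place in the finite-dimensional algebra of operators on $\bbC^{2S+1}$, so differentiating the matrix exponentials is unproblematic, and since $S\scr{\vec a}$ commutes with $\e{\pm\ii t S\scr{\vec a}}$ a direct computation yields the Heisenberg-type equation
\be
F'(t) = -\ii\, [S\scr{\vec a}, F(t)].
\ee

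First I would solve this equation inside the three-dimensional subspace spanned by $S\scr1, S\scr2, S\scr3$. Writing $F(t) = S\scr{\vec c(t)}$ for a vector-valued curve $\vec c(t)$ with $\vec c(0) = \vec b$, the generalised commutation relation \eqref{gen Pauli commutations} turns the right-hand side into $-\ii\,[S\scr{\vec a}, S\scr{\vec c(t)}] = S\scr{\vec a \times \vec c(t)}$. Because the map $\vec c \mapsto S\scr{\vec c}$ is linear and (for $S \geq \tfrac12$) injective, the operator equation is equivalent to the vector ODE
\be
\dot{\vec c}(t) = \vec a \times \vec c(t), \qquad \vec c(0) = \vec b.
\ee

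The key step is to identify the solution of this ODE with the rotation $R_{t\vec a}\vec b$. This rests on the standard fact that the infinitesimal generator of rotation about the axis $\vec a$ is precisely the cross product $\vec a \times \cdot$, i.e.\ $\frac{\dd}{\dd t} R_{t\vec a}\vec b = \vec a \times (R_{t\vec a}\vec b)$, the axis having angular speed $\|\vec a\|$; this is the exponential map from the Lie algebra of antisymmetric $3\times 3$ matrices to the rotation group. Consequently the curve $t \mapsto S\scr{R_{t\vec a}\vec b}$ solves the same linear first-order operator ODE as $F(t)$ with the same initial datum $S\scr{\vec b}$, and uniqueness forces $F(t) = S\scr{R_{t\vec a}\vec b}$ for all $t$. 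Evaluating at $t=1$ gives the lemma.

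The only real subtlety, and the step I would be most careful about, is this identification of the rotation generator with the cross product together with the bookkeeping of the angle: one must verify that rotation about $\vec a$ by the angle $\|\vec a\|$ corresponds exactly to $t=1$ in the ODE, so that no spurious factor of $\|\vec a\|$ creeps into the exponent. Everything else --- differentiating the exponentials, the injectivity of $\vec c \mapsto S\scr{\vec c}$, and the uniqueness argument --- is routine in finite dimensions.
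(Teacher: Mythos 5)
Your proof is correct and takes essentially the same route as the paper: both replace $\vec a$ by $t\vec a$, differentiate to obtain the Heisenberg-type equation $F'(t) = -\ii\,[S\scr{\vec a}, F(t)]$, verify via the commutation relation \eqref{gen Pauli commutations} and the generator identity $\frac{\dd}{\dd t} R_{t\vec a}\vec b = \vec a \times R_{t\vec a}\vec b$ that $t \mapsto S\scr{R_{t\vec a}\vec b}$ satisfies the same linear ODE with the same initial datum, and conclude by uniqueness. The intermediate reduction to the vector ODE $\dot{\vec c} = \vec a \times \vec c$ and the injectivity remark are harmless but not needed, since your final uniqueness argument already closes the proof directly at the operator level, exactly as in the paper.
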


\begin{proof}
We replace $\vec a$ by $s \vec a$, and we check that both sides of the identity satisfy the same differential equation. We find
\be
\frac{\dd}{\dd s} \e{-\ii S^{(s \vec a)}} S\scr{\vec b} \e{\ii S^{(s \vec a)}} = -\ii [ S\scr{\vec a}, \e{-\ii S^{(s \vec a)}} S\scr{\vec b} \e{\ii S^{(s \vec a)}} ],
\ee
and
\be
\frac{\dd}{\dd s} S\scr{R_{s \vec a} \vec b} = \Bigl( \frac\dd{\dd s} R_{s \vec a} \vec b \Bigr) \cdot \vec S = \Bigl(  \vec a \times R_{s \vec a} \vec b \Bigr) \cdot \vec S = -\ii [ S\scr{\vec a}, S\scr{R_{s \vec a} \vec b} ].
\ee
We used \eqref{gen Pauli commutations} for the last identity.
\end{proof}

We obtain the following consequence.
Here $|z|$ denotes the sum of the coordinates of $z\in\mathbb Z^d$.

\begin{proposition}\label{prop corr rotations}
Let $\rho(1),\rho(2),\rho(3)$ be a permutation of $1,2,3$ and let
$\s_1,\s_2,\s_3\in\{-1,+1\}$ satisfy $\s_1\s_2\s_3=+1$.  Let 
$H_{\Lambda,h}$ be as in \eqref{def ham qu} and let
\be
\tilde H_{\Lambda,h} = -\sum_{i=1}^3 \sum_{x,y\in\Lambda} 
\tilde J_{x-y}\scr{i} S_x\scr{i} S_y\scr{i}
-\sum_{x\in\Lambda} \tilde h_x S\scr{\rho^{-1}(3)}_x
\ee
where
$\tilde J_x\scr{i}=(\s_{\rho(i)})^{|x|} J_x\scr{\rho(i)}$
and $\tilde h_x=(\s_{\rho^{-1}(3)})^{|x|} h$.
Then for any finite $\Lambda\subset \ZZ^d$
\be
\Tr \e{-\beta H_{\Lambda,h}}=
\Tr \e{-\beta \tilde H_{\Lambda,h}}.
\ee
% and
% \be
% \langle S_x\scr3S_y\scr3\rangle_H
% = \s_3^{|x-y|}
% \langle S_x\scr{\rho^{-1}(3)}
% S_y\scr{\rho^{-1}(3)}\rangle_{\tilde H}.
% \ee
\end{proposition}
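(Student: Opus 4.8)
The plan is to realise $\tilde H_{\Lambda,h}$ as the image of $H_{\Lambda,h}$ under a trace-preserving transformation of operators; since $\Tr\e{-\beta\,\cdot}$ is invariant under unitary conjugation (and, as we shall see, under complex conjugation), the identity of partition functions follows at once. The natural candidate is conjugation by a \emph{site-dependent} spin rotation. Guided by the form of $\tilde J$ and $\tilde h$, I would look for an on-site orthogonal map $O_x\in O(3)$ acting through $S_x\scr{i}\mapsto S_x\scr{O_x\vec e_i}$, and take
\[
O_x\vec e_i=(\s_i)^{|x|}\,\vec e_{\rho^{-1}(i)},
\]
which permutes the spin axes by $\rho^{-1}$ — so that the coupling $J\scr{\rho(i)}$ lands in the $i$-th slot and the field ends up along $\rho^{-1}(3)$ — while attaching the staggered sign $(\s_i)^{|x|}$.

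What makes an on-site (rather than on-bond) transformation sufficient is the additivity of $|\cdot|$: since $|x-y|=|x|-|y|$ and $\s_i=\pm1$, the bond weight factorises over its endpoints,
\[
(\s_i)^{|x-y|}=(\s_i)^{|x|}(\s_i)^{|y|}.
\]
Applying the transformation to a bond operator $S_x\scr{i}S_y\scr{i}$ therefore produces exactly the staggered coupling $(\s_{\rho(i)})^{|x-y|}J_{x-y}\scr{\rho(i)}$ of $\tilde J$ after the relabelling $j=\rho^{-1}(i)$, and applying it to the single field operator $S_x\scr3$ produces the staggered field of $\tilde H$ in direction $\rho^{-1}(3)$. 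Checking these two substitutions is then a routine, if sign-sensitive, computation.

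The structural issue, which I expect to be the crux, is that $O_x$ need not be a rotation, so Lemma \ref{lem spin rotations} does not apply directly. Its determinant is the product of $\operatorname{sgn}(\rho)$ from the permutation and $\prod_i(\s_i)^{|x|}=(\s_1\s_2\s_3)^{|x|}$ from the signs; the hypothesis $\s_1\s_2\s_3=+1$ is precisely what collapses the second factor to $1$, leaving $\det O_x=\operatorname{sgn}(\rho)$ \emph{independent of $x$}. When $\rho$ is even every $O_x$ lies in $SO(3)$ and equals $R_{\vec a_x}$ for some $\vec a_x$ depending only on the parity of $|x|$; then $U=\prod_{x\in\Lambda}\e{\ii S_x\scr{\vec a_x}}$ is unitary (its on-site factors commute), Lemma \ref{lem spin rotations} gives $\e{-\ii S_x\scr{\vec a_x}}S_x\scr{i}\e{\ii S_x\scr{\vec a_x}}=S_x\scr{O_x\vec e_i}$, and cyclicity of the trace closes the argument.

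When $\rho$ is odd every $O_x$ is improper and no unitary can reproduce it, so a reflection must be supplied separately. For this I would use that $H_{\Lambda,h}$ is real symmetric in the standard basis (each summand contains an even number of the purely imaginary operator $S\scr2$), whence complex conjugation — equivalently transposition — acts on every $\vec S_x$ as the fixed reflection $F=\operatorname{diag}(1,-1,1)$ and leaves $\Tr\e{-\beta H_{\Lambda,h}}$ unchanged. Writing $O_x=R_xF$ with $R_x:=O_xF\in SO(3)$ (its determinant is now $+1$), the composite operation ``conjugate by the unitary realising the $R_x$, then complex-conjugate'' enacts $O_x$ at every site while preserving the trace, reducing the odd case to the even one. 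The only genuinely delicate points are this determinant/parity dichotomy and the sign bookkeeping that matches $(\s_i)^{|x|}$ to the prescribed staggering of $\tilde J$ and $\tilde h$.
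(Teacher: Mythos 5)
Your proposal is correct, and its core --- site-dependent transformations enacting a signed permutation of the spin axes, with the staggering $(\s_i)^{|x|}$ factorising over bond endpoints because $|x-y|=|x|-|y|$, followed by invariance of $\Tr\e{-\beta\,\cdot}$ --- is the same as the paper's. Where you genuinely diverge is the improper case. You split according to $\det O_x=\operatorname{sgn}(\rho)$ (constant in $x$ precisely because $\s_1\s_2\s_3=+1$) and, for odd $\rho$, compose the unitary rotations with complex conjugation, which implements $\operatorname{diag}(1,-1,1)$ on every spin triple and preserves the trace since $H_{\Lambda,h}$ is real in the standard basis. The paper avoids this dichotomy altogether: it never tries to realise the bare permutation, but instead picks $V_x$ realising a \emph{signed} permutation, $V_x S_x\scr{\rho^{-1}(3)}V_x^\ast=S_x\scr3$ and $V_x S_x\scr{\rho^{-1}(i)}V_x^\ast=\pm S_x\scr{i}$ for $i=1,2$, with the $x$-independent signs chosen so that the corresponding orthogonal map has determinant $+1$ (hence is a product of $\tfrac\pi2$-rotations); the stray signs are harmless because the bond terms are quadratic and the field lies along the third axis. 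Note that this is exactly the structural fact that licenses your complex-conjugation step (every term of $H$ contains an even number of $S\scr2$'s), so the two mechanisms are equivalent in substance; the paper's stays entirely unitary, while yours makes the $SO(3)$-versus-$O(3)$ obstruction explicit and resolves it with an antiunitary symmetry.

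One point you glossed over --- though it is a defect of the statement, not of your argument: your construction attaches the sign $(\s_3)^{|x|}$ to the field (the sign indexed by the \emph{old} axis, consistently with the convention $\tilde J_x\scr{i}=(\s_{\rho(i)})^{|x|}J_x\scr{\rho(i)}$ used for the couplings), and the paper's unitary produces the same. The statement instead writes $\tilde h_x=(\s_{\rho^{-1}(3)})^{|x|}h$, and under that literal reading the identity of partition functions fails whenever $\s_{\rho^{-1}(3)}\neq\s_3$: already for two sites in $\ZZ$, $J\scr1=J\scr2=0$, $J\scr3=J\neq0$, $h\neq0$, with $\rho$ the $3$-cycle sending $1\mapsto2\mapsto3\mapsto1$ and $\s=(-1,-1,+1)$, the two partition functions are $2\e{\beta J/2}\cosh(\beta h)+2\e{-\beta J/2}$ and $2\e{\beta J/2}+2\e{-\beta J/2}\cosh(\beta h)$, which differ. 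So what you proved is the corrected statement with $\tilde h_x=(\s_3)^{|x|}h$ --- which is also what the paper's own proof establishes.
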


Using Proposition \ref{prop corr rotations}
one may transfer results on non-differentiability of the
free energy or long-range order from one set of coupling parameters to another.

%we see that the free energy $f(\beta,h)$
%is non-differentiable at $h=0$
%provided $-J\scr3\geq -J\scr1\geq -J\scr2\geq0$,
%or provided $J\scr2\geq -J\scr1\geq J\scr3\geq0$,
% and so on.

% Below, in Theorem %\ref{thm LRO} and 
% \ref{thm LRO nn}, it is established 
% \emph{under the assumption}
% $J\scr3\geq J\scr1\geq -J\scr2\geq0$,
% that
% $\langle S_0\scr3S_x\scr3\rangle$
% exhibits long-range-order for large enough $d$ and $\beta$, 
% in the sense that $\tfrac1{|\Lambda|}\sum_{x\in\Lambda} 
% \langle S_0\scr3S_x\scr3\rangle_{\Lambda,\beta}$
% is strictly positive, uniformly in $\Lambda$.
% Using Proposition \ref{prop corr rotations}
% we conclude, for example, that 
% $\tfrac1{|\Lambda|}\sum_{x\in\Lambda}  
% (-1)^{|x-y|}\langle S_0\scr3S_x\scr3\rangle_{\Lambda,\beta}$
% is uniformly positive (for large enough $d$ and $\beta$) 
% provided $-J\scr3\geq -J\scr1\geq -J\scr2\geq0$,
% and similarly for $\tfrac1{|\Lambda|}\sum_{x\in\Lambda}  
% \langle S_x\scr2S_y\scr2\rangle_{\Lambda,\beta}$
% provided $J\scr2\geq -J\scr1\geq J\scr3\geq0$,
% and so on.

\begin{proof}
For any unitary matrix $U$ acting on $\mathcal H_\Lambda$, we have
\be
\Tr \e{-\beta H_{\Lambda,h}}=
\Tr[U^\ast \e{-\beta  H_{\Lambda,h}}U]=
\Tr[\e{-\beta U^\ast H_{\Lambda,h}U}].
\ee
% \be
% \langle S_x\scr3S_y\scr3\rangle_H=
% \frac{\Tr\big[U^{\ast} S_x\scr3S_y\scr3 \e{-\beta H}U\big]}
% {\Tr\big[U^\ast\e{-\beta H}U\big]}
% =\langle U^\ast S_x\scr3S_y\scr3 U\rangle_{U^\ast H U}.
% \ee
Consider $U$ of the form
\be
U=\bigotimes_{\substack{x\in\Lambda\\|x| \text{ odd}}} 
W_x \bigotimes_{x\in\Lambda} V_x.
\ee
By combining rotations by an angle $\tfrac\pi2$, we can choose $V_x$
such that $V_x S_x\scr{\rho^{-1}(3)}V_x^\ast=S_x\scr{3}$
while $V_x S_x\scr{\rho^{-1}(i)}V_x^\ast=\pm S_x\scr{i}$
for $i=1,2$, where the $\pm$ does not depend on $x$.
By combining rotations by an angle $\pi$,
we can choose $W_x$ such that
$W_x S_x\scr{i} W^\ast_x=\sigma_i S_x\scr{i}$.
Then $U^\ast H_{\Lambda,h}U=\tilde H_{\Lambda,h}$.
% We claim that there is a unitary $U$ such that the right-hand-side
% equals $\s_3^{|x-y|}
% \langle S_x\scr{\rho^{-1}(3)}
% S_y\scr{\rho^{-1}(3)}\rangle_{\tilde H}$.
% To do this it suffices to show that there is a unitary $U$ such that 
% for all $i\in\{1,2,3\}$,
% \be
% U S_x\scr{i} S_y\scr{i} U^\ast=\left\{\begin{array}{ll}
% S_x\scr{\rho(i)} S_y\scr{\rho(i)}, & \mbox{for $x,y$ in the same sublattice}, \\
% \s_{\rho(i)}S_x\scr{\rho(i)} S_y\scr{\rho(i)}, &
% \mbox{for $x,y$ in different sublattices.}
% \end{array}\right.
% \ee
% By writing $U=WV$ it suffices to find unitaries $V$ and $W$ such that
%  $V S_x\scr{i} V^\ast=\pm S_x\scr{\rho(i)}$
% for all $x\in\Lambda$, 
% while $W S_x\scr{i} W^\ast=\s_i^{|x|}S_x\scr{i}$.
% That such $V$ exists is easily seen using 
% Lemma \ref{lem spin rotations}, since we can obtain any transposition
% of two spin-matrices by rotating round the third by an angle
% $\tfrac\pi2$;  tensoring these rotations over all $x\in\Lambda$ gives
% the case when $\rho$ is any transposition, hence any permutation
% $\rho$.  Similarly, we can change the sign of any two spin-matrices by
% rotating around the third by an angle $\pi$;  tensoring this over all
% $x$ in the odd sublattice of $\Lambda$ gives any $\sigma$ as in the
% claim.  
\end{proof}

\section{Long-range order}
\label{sec LRO}

We state two results about long-range order. The first theorem holds
for a larger class of coupling constants and for $S$ large enough. The
second theorem is restricted to nearest-neighbour interactions, but it
has the advantage of applying to more values of $S$ and more dimensions. 
To briefly summarise the consequences of those results, 
we will see that long-range order  (in
the form \eqref{def LRO}) holds under the following conditions:
\begin{itemize}[leftmargin=*]
\item for certain long-range interactions (specified below)
if $\beta\geq\beta_0$ for some $\beta_0<\oo$
provided $d\geq 3$ and $S$ is large enough, or
\item for nearest-neighbour interactions
if $\beta\geq\beta_0$ for some $\beta_0<\oo$
provided $d\geq 3$ and $S\geq\frac12$, or
\item for nearest-neighbour interactions
in the ground-state $\beta=\oo$
provided $d\geq 2$ and either  $S\geq1$, or
$S\geq \tfrac12$ and $-J\scr2/J\scr1\leq 0.13$. 
\end{itemize}

The original results dealt with nearest-neighbour interactions
\cite{FSS,DLS}. Then Fr\"ohlich, Israel, Lieb, and Simon formulated a
more abstract notion of reflection positive interactions \cite{FILS1}. Here we consider two cases that fit the setting;
we provide explicit proofs for each, see Lemma \ref{lem RP fpart}. 
Explicitly, we consider interactions of the following forms:
\begin{itemize}[leftmargin=*]
\item Nearest neighbour:  $J_x\scr{i}=0$ unless $\|x\|_1=1$, in
  which case it equals some constant $J\scr{i}$;
\item Fourier transform:
$J_x\scr{i}=\int_{\RR^d} \dd\nu\scr{i}(k)\e{ik\cdot x}$
where $\nu\scr{i}$ is a positive, finite measure on $\RR^d$.
\end{itemize}
The latter case allows us to include these
examples: 
\begin{itemize}[leftmargin=*]
\item $J_x\scr{i}=a\scr{i}\e{-b\scr{i}\|x\|_p^p}$
for $p\in(0,2]$ and 
constants $a\scr{i}\in\RR$, $b\scr{i}>0$.  Indeed, 
this follows from the fact that the
characteristic function of a \emph{stable distribution}
in probability theory is of the form $\e{-c|t|^p}$.
(For $p>2$ this is not possible
as the positivity of $\nu$ would be violated.)  See e.g.\  \cite{durrett}.
\item$J_x\scr{i}=a\scr{i}\|x\|_p^{-c\scr{i}}$ 
with $p\in(0,2]$, $a\scr{i}\in\RR$ and $c\scr{i}>d$.  Indeed, 
we can take linear combinations of the interactions above
with non-negative
coefficients, and we have
\be
\int_0^\infty s^{(c-1)/p} \e{-s\|x\|_p^p} \dd s 
= C \|x\|_p^{-c}.
\ee
Here $c>d$ is required  in order for the sum defining
$J_{x,\mathrm{per}}\scr{i}$ to be 
convergent.
%\item interactions given by nearest-neighbour random walk:
%$J_x\scr{i} = c\scr{i} \sum_{w:0\rightarrow  x}
%(\frac{\lambda\scr{i}}{2d})^{|w|}$
%where $c\scr{i}\in\RR$ and $\lambda\scr{i}\in[0,1)$
%and the sum is over nearest-neighbour walks 
%in $\ZZ^d$ from $0$ to $x$;
%\item the random walks above can be defined on a lattice with small
  %mesh $a$; as $a\to0$ we get 
%long-range interactions that involve the Euclidean distance between
%sites such as $J_x\scr{i}=a\scr{i}\|x\|_2^{-s}$ with $a\scr{i}\in\RR$
%and $s>d$; 
\item Convex combinations of the above.
\end{itemize}
Let $\Lambda_\ell^*$ denote the dual of $\Lambda_\ell$
in Fourier theory, namely 
\be
\Lambda_\ell^* = \tfrac{2\pi}\ell \bigl\{ -\tfrac\ell2 + 1, \dots, \tfrac\ell2 \bigr\}^d \subset [-\pi,\pi]^d.
\ee

\begin{theorem}
\label{thm LRO}
Assume that $J\scr{i}_x$ is one of the interactions above; we assume in addition that $\ell$ is even and that
\[
%\label{cond J}
J_x\scr3 \geq J_x\scr1 %\geq 0, %\qquad J\scr3(x) 
\geq -J_x\scr2 \geq 0,
\qquad\mbox{for all } x\in\ZZ^d.
%J\scr1(x) \geq 0, \qquad J\scr2(x) \leq 0, \qquad J\scr3(x) \geq 0.
\]
Then
\be\label{eq:thm LRO}
\frac1{\ell^{d}} \sum_{x \in \Lambda_\ell} \langle S_0\scr3
S_x\scr3 \rangle^{\textrm{per}}_{\Lambda_\ell, \beta, 0} \geq \tfrac13
S(S+1) - \frac1{\ell^d} \sum_{k \in \Lambda_\ell^* \setminus \{0\}}
\sqrt{\frac{e(k)}{2\eps(k)}} - 
\frac1{2\beta \ell^d} \sum_{k \in \Lambda_\ell^* \setminus \{0\}} \frac1{\eps(k)}.
\ee
\iffalse %%%%%%%% the following is invisible! %%%%%%%%%%%%%%%
\begin{enumerate}[leftmargin=*]
\item Positive temperature: We have
%\[
%\begin{split}
%\text{Classical case: } \quad & \frac1{\ell^{d}} \sum_{x \in \Lambda_\ell^{\textrm per}} \langle \sigma_0\scr1 \sigma_x\scr1 \rangle_{\Lambda_\ell^{\textrm per}, \beta, 0} > c. \\
%\text{Quantum case: } \quad & \frac1{\ell^{d}} \sum_{x \in \Lambda_\ell^{\textrm per}} \langle S_0\scr3 S_x\scr3 \rangle_{\Lambda_\ell^{\textrm per}, \beta, 0} > c.
%\end{split}
%\]
\[
\frac1{\ell^{d}} \sum_{x \in \Lambda_\ell^{\textrm{per}}} \langle S_0\scr3 S_x\scr3 \rangle_{\Lambda_\ell^{\textrm{per}}, \beta, 0} > c_\ell(S,\beta),
\]
where
\[
\lim_{\ell\to\infty} c_\ell(S,\beta) = \tfrac13 S(S+1) - \frac1{(2\pi)^d} \int_{[-\pi,\pi]^d} \sqrt{\frac{e(k)}{2\eps(k)}} \dd k - \frac1{2\beta} \frac1{(2\pi)^d} \int_{[-\pi,\pi]^d} \frac{\dd k}{\eps(k)}.
\]
\item Ground state: We have
\[
\frac1{\ell^{d}} \sum_{x \in \Lambda_\ell^{\textrm{per}}} 
\langle S_0\scr3 S_x\scr3 \rangle_{\Lambda_\ell^{\textrm{per}}, \oo, 0} > c_\ell(S,\oo),
\]
where
\[
\lim_{\ell\to\infty} c_\ell(S,\oo) = \tfrac13 S(S+1) - \frac1{(2\pi)^d} \int_{[-\pi,\pi]^d} \sqrt{\frac{e(k)}{2\eps(k)}} \dd k.
\]
\end{enumerate}
\fi
%%%%%%%%%% End of invisible part %%%%%%%%%%%%%
\end{theorem}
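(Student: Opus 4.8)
The plan is to pass to Fourier space, use a sum rule to isolate the zero mode, and then transfer the infrared bound from the Duhamel to the ordinary two-point function; throughout I work in the translation-invariant periodic state $\langle\cdot\rangle^{\textrm{per}}_{\Lambda_\ell,\beta,0}$. I would introduce the Fourier modes $\hat S\scr3_k=\sum_{x\in\Lambda_\ell}\e{\ii k\cdot x}S\scr3_x$ for $k\in\Lambda_\ell^*$. From the identity $\sum_{k\in\Lambda_\ell^*}\e{\ii k\cdot(x-y)}=\ell^d\delta_{x,y}$ together with translation invariance, the left-hand side of \eqref{eq:thm LRO} is exactly the zero mode, $\frac1{\ell^d}\sum_{x}\langle S\scr3_0 S\scr3_x\rangle=\frac1{\ell^{2d}}\langle\hat S\scr3_0\hat S\scr3_0\rangle$, while the full sum gives the single-site value $\frac1{\ell^{2d}}\sum_{k}\langle\hat S\scr3_k\hat S\scr3_{-k}\rangle=\langle(S\scr3_0)^2\rangle$. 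Subtracting, the target quantity equals
\[
\langle(S\scr3_0)^2\rangle-\frac1{\ell^{2d}}\sum_{k\in\Lambda_\ell^*\setminus\{0\}}\langle\hat S\scr3_k\hat S\scr3_{-k}\rangle ,
\]
so it suffices to bound the first term below by $\tfrac13 S(S+1)$ and each off-diagonal summand above by the two stated corrections.

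For the diagonal term I would use the spin-length constraint $(S\scr1_0)^2+(S\scr2_0)^2+(S\scr3_0)^2=S(S+1){\textrm{Id}}$ from \eqref{spin relations}, which gives $\sum_i\langle(S\scr i_0)^2\rangle=S(S+1)$. The hypothesis $J\scr3_x\ge J\scr1_x\ge-J\scr2_x\ge0$ singles out the third direction as the favoured axis, and I would deduce the anisotropy inequalities $\langle(S\scr3_0)^2\rangle\ge\langle(S\scr1_0)^2\rangle$ and $\langle(S\scr3_0)^2\rangle\ge\langle(S\scr2_0)^2\rangle$ from the correlation inequalities of the appendix; combined with the constraint these yield $\langle(S\scr3_0)^2\rangle\ge\tfrac13 S(S+1)$.

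For the off-diagonal modes I would invoke the infrared bound of Section \ref{sec IRB}, which controls the \emph{Duhamel} two-point function, $(\hat S\scr3_k,\hat S\scr3_{-k})\le\ell^d/(2\beta\eps(k))$, and transfer it to $\langle\hat S\scr3_k\hat S\scr3_{-k}\rangle$ via the Falk--Bruch inequality. That inequality bounds the ordinary two-point function in terms of the Duhamel function and the double commutator $\tfrac12\langle[\hat S\scr3_{-k},[H^{\textrm{per}}_{\Lambda_\ell,0},\hat S\scr3_k]]\rangle$; since $S\scr3$ commutes with itself, only the $J\scr1,J\scr2$ terms survive the commutator, and I would compute it explicitly from \eqref{spin relations} and bound it by a multiple of $\ell^d e(k)$. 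The square-root term of Falk--Bruch multiplies the Duhamel bound ($\propto 1/\beta\eps(k)$) by the commutator bound ($\propto\beta e(k)$); the two powers of $\beta$ cancel and leave the temperature-independent contribution $\sqrt{e(k)/2\eps(k)}$, whereas the residual term is the bare Duhamel bound and contributes $1/(2\beta\eps(k))$. Summing over $k\ne0$ and restoring the prefactor $\ell^{-2d}$ reproduces the two sums in \eqref{eq:thm LRO}.

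The main obstacle is the transfer step: one must quote the Falk--Bruch inequality in a form with explicit constants, carry out the double-commutator algebra to identify $e(k)$, and check that the dispersion $\eps(k)$ produced there is exactly the one appearing in the infrared bound of Section \ref{sec IRB} — which is precisely where reflection positivity of the long-range interactions (Lemma \ref{lem RP fpart}) must be invoked. The anisotropy inequalities $\langle(S\scr3_0)^2\rangle\ge\langle(S\scr i_0)^2\rangle$ are a secondary technical point, handled by the appendix, while the Fourier reduction and the sum rule are routine.
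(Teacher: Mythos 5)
Your proposal is correct and follows essentially the same route as the paper: the Fourier sum rule isolating the zero mode, the bound $\langle (S_0\scr3)^2\rangle \geq \tfrac13 S(S+1)$ via the appendix correlation inequalities, and the transfer of the Duhamel infrared bound (Lemma \ref{lem IRB Duhamel}) to the ordinary two-point function via Falk--Bruch with the double-commutator computation identifying $e(k)$ --- which is exactly the content of the paper's Lemma \ref{lem IRB corr fct}, quoted in its proof of the theorem. The only difference is that you re-derive that lemma inline rather than citing it, with the same constants emerging.
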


Here we defined
\be
\label{def eps(k)}
\eps(k) = \sum_{x \in \bbZ^d} J_{x,{\textrm{per}}}\scr3 (1 - \cos kx)
\ee
while the function $e(k)$ is defined in
 \eqref{def e(k)}. 
Notice that $\eps(k)$ is bounded and that $\eps(k) \sim k^2$ around
$k=0$; it is positive for $k \neq 0$. 
It is worth pointing out 
that $e(k) \leq {\textrm{const}}\, S^2$ around $k=0$. Therefore 
the right-hand-side of \eqref{eq:thm LRO} is
necessarily positive when $d\geq3$ and $S,\beta$ are large enough.

%Also note that the integral of $\eps(k)^{-1}$ diverges for $d=2$, thus the
%statement for the ground state is in this case not a consequence of
%the positive temperature result.
%In dimension $d=2$ it is possible to take the limit $\beta\to\infty$ for
%finite $\ell$, and $c(S,\infty)$ is positive for $S$ large enough. 

We now assume that $J\scr{i}$ are nearest-neighbour couplings, that is,
\be
J_x\scr{i} = \begin{cases} J\scr{i} & \text{if } \|x\|=1, \\ 0 & 
\text{otherwise.} \end{cases}
\ee
We further normalise them so that $J\scr3=1$.
In this case we derive sharper lower bounds for long-range order. Let
us introduce the following two sums: 
\be
\begin{split}
&I_\ell\scr{d} =  \frac1{\ell^d} \sum_{k \in \Lambda_\ell^* \setminus \{0\}} \sqrt{\frac{\eps(k+\pi)}{\eps(k)}}, \\
&\tilde I_\ell\scr{d} = \frac1{\ell^d} \sum_{k \in \Lambda_\ell^* \setminus \{0\}} \sqrt{\frac{\eps(k+\pi)}{\eps(k)}} \Bigl( \frac1d \sum_{i=1}^{d} \cos k_{i} \Bigr)_{+}.
\end{split}
\ee
Here, $\eps(k)=2\sum_{i=1}^d(1-\cos k_i)$ and
 $\eps(k+\pi) = 2 \sum_{i=1}^{d} (1+\cos k_{i})$, and $(\cdot)_{+}$ denotes the positive part. Their infinite volume limits converge to the integrals
\be
\label{I and I}
\begin{split}
&I\scr{d} = \lim_{\ell \to \infty} I_\ell\scr{d} =  \frac1{(2\pi)^{d}} \int_{[-\pi,\pi]^{d}} \sqrt{\frac{\eps(k+\pi)}{\eps(k)}} \dd k, \\
&\tilde I\scr{d} = \lim_{\ell \to \infty} \tilde I_\ell\scr{d} = \frac1{(2\pi)^{d}} \int_{[-\pi,\pi]^{d}} \sqrt{\frac{\eps(k+\pi)}{\eps(k)}} \Bigl( \frac1d \sum_{i=1}^{d} \cos k_{i} \Bigr)_{+} \dd k.
\end{split}
\ee
One can check that, as $d\to\infty$, these integrals satisfy $I\scr{d} \to 1$ \cite{DLS} and $\tilde I\scr{d}\to1$ \cite{KLS2}. We also introduce the expression
\be
\alpha_\ell(\beta) = J\scr1 \langle S_0\scr1 S_{e_1}\scr1 \rangle_{\Lambda_\ell,\beta,0} + J\scr2 \langle S_0\scr2 S_{e_1}\scr2 \rangle_{\Lambda_\ell,\beta,0}
\ee
and $\alpha(\beta) = \liminf_{\ell\to\infty} \alpha_\ell(\beta)$. We also denote by $\alpha_\ell(\infty)$ the $\beta\to\infty$ limit.

\begin{theorem}
\label{thm LRO nn}
Assume that $\ell$ is even and that the nearest-neighbour coupling constants satisfy
\[
J\scr3 = 1 \geq J\scr1 \geq -J\scr2 \geq 0.
\]
Then we have the two lower bounds:
\[
\begin{split}
\frac1{\ell^{d}} \sum_{x \in \Lambda_\ell} & \langle S_0\scr3
S_x\scr3 \rangle^{\textrm{per}}_{\Lambda_\ell, \beta, 0} \geq \\
&\begin{cases} 
\tfrac13 S(S+1) - \tfrac12 (I_\ell\scr{d} +\tfrac{\sqrt2}{\ell^d})
\sqrt{\alpha_\ell(\beta)} -
\frac1{2\beta \ell^d} \sum_{k \in \Lambda_\ell^* \setminus \{0\}}
\frac1{\eps(k)}, \\ \sqrt{\alpha_\ell(\beta)} \bigl[
\frac{\sqrt{\alpha_\ell(\beta)}}{1 - J\scr2 / J\scr1} - \frac12 \tilde
I_\ell\scr{d} \bigr] - \frac1{2\beta \ell^d} \sum_{k \in
  \Lambda_\ell^* \setminus \{0\}} \frac1{\eps(k)} \bigl( \frac1d
\sum_{i=1}^d \cos k_i \bigr)_+.
\end{cases}
\end{split}
\]
\end{theorem}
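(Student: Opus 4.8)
The plan is to reduce both inequalities to the single per-mode estimate that underlies Theorem \ref{thm LRO}, and then to feed it into two \emph{different} sum rules for the Fourier modes $\hat b_k := \ell^{-2d}\langle S_k\scr3 S_{-k}\scr3\rangle^{\textrm{per}}_{\Lambda_\ell,\beta,0}$, where $S_k\scr3 = \sum_{x\in\Lambda_\ell}\e{-\ii kx}S_x\scr3$. By translation invariance the left-hand side of the theorem is exactly the zero mode $\hat b_0$. First I would record the per-mode bound established in Section \ref{sec IRB} (which, summed over $k\neq0$, gives Theorem \ref{thm LRO}): for every $k\neq0$,
\[
\ell^d\hat b_k\le\sqrt{\frac{e(k)}{2\eps(k)}}+\frac1{2\beta\eps(k)},
\]
obtained by combining the Duhamel infrared bound with the Falk--Bruch transfer inequality and the elementary estimate $\coth x\le1+\tfrac1x$.

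Next I would make $e(k)$ explicit in the nearest-neighbour case. Computing the double commutator $\tfrac12\langle[S_{-k}\scr3,[H^{\textrm{per}}_{\Lambda_\ell,0},S_k\scr3]]\rangle$ from \eqref{spin relations} yields
\[
e(k)=d\,\alpha_\ell(\beta)-\alpha'\sum_{j=1}^d\cos k_j,\qquad \alpha'=J\scr1\langle S_0\scr2 S_{e_1}\scr2\rangle+J\scr2\langle S_0\scr1 S_{e_1}\scr1\rangle.
\]
From the coupling hypotheses and the correlation inequalities of the appendix I would extract $\alpha_\ell(\beta)\ge0$, $0\le-\alpha'\le\alpha_\ell(\beta)$, and $\langle S_0\scr3 S_{e_1}\scr3\rangle\ge\max\{\langle S_0\scr1 S_{e_1}\scr1\rangle,\,-\langle S_0\scr2 S_{e_1}\scr2\rangle\}\ge0$. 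The bounds on $\alpha'$ give the pointwise comparison $e(k)\le\tfrac12\alpha_\ell(\beta)\,\eps(k+\pi)=\alpha_\ell(\beta)\bigl(d+\sum_j\cos k_j\bigr)$ on the half of the Brillouin zone where $\sum_j\cos k_j\ge0$; it \emph{fails} on the complementary half, which is what forces the two different treatments below.

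For the first bound I would use the unweighted sum rule $\sum_k\hat b_k=\langle(S_0\scr3)^2\rangle\ge\tfrac13 S(S+1)$, the last inequality expressing that the third spin direction carries the largest fluctuations. Since the comparison for $e(k)$ is only valid on half the zone, I would pair $k$ with $k+\pi$ (ordering the pair so that $q:=\eps(k+\pi)\ge p:=\eps(k)$): with $\delta=-\alpha'(q-p)/4\in[0,\alpha_\ell(q-p)/4]$ the two contributions of the pair add to $\sqrt{(d\alpha_\ell+\delta)/2p}+\sqrt{(d\alpha_\ell-\delta)/2q}$, a concave function of $\delta$ whose maximum over the admissible interval sits at the right endpoint and equals $\tfrac12\sqrt{\alpha_\ell}\,(\sqrt{q/p}+\sqrt{p/q})$. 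Summing over pairs rebuilds $\tfrac12\sqrt{\alpha_\ell}\,I_\ell\scr d$, while the self-excluded momentum $k=\pi$ (whose partner $0$ is dropped) contributes at most $\tfrac{\sqrt2}{4}\sqrt{\alpha_\ell}\,\ell^{-d}$, accounting for the extra $\tfrac{\sqrt2}{\ell^d}$.

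For the second bound I would instead use the weighted sum rule $\sum_k\hat b_k\bigl(\tfrac1d\sum_i\cos k_i\bigr)=\langle S_0\scr3 S_{e_1}\scr3\rangle$, which isolates
\[
\hat b_0=\langle S_0\scr3 S_{e_1}\scr3\rangle-\sum_{k\neq0}\hat b_k\bigl(\tfrac1d\sum_i\cos k_i\bigr).
\]
Discarding the negative contributions replaces the weight by its positive part, and the decisive point is that this positive part is supported \emph{exactly} where the clean comparison $e(k)\le\tfrac12\alpha_\ell\,\eps(k+\pi)$ holds; hence no pairing is needed, the quantum correction is directly $\tfrac12\sqrt{\alpha_\ell}\,\tilde I_\ell\scr d$, and the thermal term inherits the weight $\bigl(\tfrac1d\sum_i\cos k_i\bigr)_+$. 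The leading term is then produced by the correlation inequality $\langle S_0\scr3 S_{e_1}\scr3\rangle\ge\alpha_\ell(\beta)/(1-J\scr2/J\scr1)$, which follows from $\langle S_0\scr3 S_{e_1}\scr3\rangle\ge\langle S_0\scr1 S_{e_1}\scr1\rangle$, $\langle S_0\scr3 S_{e_1}\scr3\rangle\ge-\langle S_0\scr2 S_{e_1}\scr2\rangle$ and $1\ge J\scr1\ge-J\scr2\ge0$. The main obstacle is precisely establishing these nearest-neighbour correlation inequalities (together with $\langle(S_0\scr3)^2\rangle\ge\tfrac13 S(S+1)$), since they are what anchor the abstract transfer estimate to the concrete coupling constants; the double-commutator computation and the Fourier manipulations are then routine.
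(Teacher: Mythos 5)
Your proposal is correct in substance and follows essentially the same route as the paper: the per-mode estimate of Lemma \ref{lem IRB corr fct}, the nearest-neighbour identity $e(k)=d\,\alpha_\ell(\beta)-\alpha'\sum_j\cos k_j$ (the paper writes this as $e(k)=\alpha_\ell(\beta)\sum_j(1+r\cos k_j)$ with $r=-\alpha'/\alpha_\ell(\beta)$), the on-site sum rule with $\langle (S_0\scr3)^2\rangle\geq\tfrac13 S(S+1)$ for the first bound, and, for the second, the weighted sum rule of \cite{KLS1}, positivity of the Fourier transform \eqref{pos corr fct}, and the bound $\langle S_0\scr3 S_{e_1}\scr3\rangle\geq\alpha_\ell(\beta)/(1-J\scr2/J\scr1)$ from Lemma \ref{lem corr ineq}. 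The one place you deviate is the treatment of the unknown $\alpha'$ in the first bound: the paper maximises globally, showing $r\mapsto I_\ell\scr{d}(r)$ is concave with vanishing derivative at $r=1$ (via the change of variables $k\mapsto k+\pi$), whereas you maximise pairwise over $\{k,k+\pi\}$. These are the same computation in different packaging; note only that concavity alone does not place the maximum of your pair function at the right endpoint of the admissible interval --- what does is that its derivative vanishes exactly at $\delta=\alpha_\ell(\beta)(q-p)/4$ (using $p+q=4d$), which is the local form of the paper's statement that $\frac{\dd}{\dd r}I_\ell\scr{d}(r)\big|_{r=1}=0$.

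One claim in your write-up is unjustified, though harmlessly so: the inequality $0\leq-\alpha'$ (i.e.\ $r\geq0$) does not follow from the correlation inequalities of the appendix. Lemma \ref{lem corr ineq} gives $|\langle S_0\scr2 S_{e_1}\scr2\rangle|\leq\langle S_0\scr1 S_{e_1}\scr1\rangle$, from which one only gets $\alpha'\leq(J\scr1+J\scr2)\langle S_0\scr1 S_{e_1}\scr1\rangle$; since $J\scr1+J\scr2\geq0$ this upper bound is nonnegative, so it does not force $\alpha'\leq0$ (that would require knowing $\langle S_0\scr2 S_{e_1}\scr2\rangle$ is sufficiently negative, which the appendix does not provide). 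Fortunately nothing in your argument needs $r\geq0$: all you use is $|\alpha'|\leq\alpha_\ell(\beta)$, i.e.\ $r\in[-1,1]$, which follows from $e(0)\geq0$ and $e(\pi)\geq0$, i.e.\ nonnegativity of the double commutator \eqref{double comm} --- exactly the paper's justification. With that weaker input, your pair function is still increasing on the whole admissible interval $[-\alpha_\ell(\beta)(q-p)/4,\,\alpha_\ell(\beta)(q-p)/4]$, so its value at the true $\delta$ is dominated by the value at the right endpoint; and the comparison $e(k)\leq\tfrac12\alpha_\ell(\beta)\,\eps(k+\pi)$ on $\{\sum_j\cos k_j\geq0\}$, which drives your second bound, uses only $-\alpha'\leq\alpha_\ell(\beta)$. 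So both bounds survive with the provable input, and your proof is sound once that claim is replaced.
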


The theorem is proved at the end of Section \ref{sec IRB}.

We want to formulate sufficient conditions under which at least one of
the lower bounds is positive, uniformly in $\ell$. The terms involving
$1/\beta$ converge as $\ell\to\infty$ if $d \geq 3$ and they can be
made arbitrarily small by taking $\beta$ sufficiently large. For $d=2$ the bounds are useful in
the ground state, i.e.\ when the limit $\beta\to\infty$ is taken
before $\ell\to\infty$. 

We get a uniform lower bound if either
\[
\tfrac13 S(S+1) > \tfrac12 I\scr{d} \sqrt{\alpha(\beta)} \qquad \text{or} \qquad \frac{\sqrt{\alpha(\beta)}}{1 - J\scr2 / J\scr1} > \tfrac12 \tilde I\scr{d}.
\]
Irrespective of the value of $\alpha(\beta)$, at least one of the lower bound is positive if
\be
\label{the condition}
\frac{\frac13 S(S+1)}{\frac12 I\scr{d}} > \tfrac12 \tilde I\scr{d} (1 - J\scr2 / J\scr1) \quad \Longleftrightarrow \quad 1 - J\scr2 / J\scr1 < \frac{\frac43 S(S+1)}{I\scr{d} \tilde I\scr{d}}.
\ee
Values of $I\scr{d}$ and $\tilde I\scr{d}$ can be found numerically; they are listed in Table \ref{table} for $d = 2,3,4$. This allows us to verify that the condition \eqref{the condition} holds for all values of $J\scr1, J\scr2$ such that $J\scr1 \geq -J\scr2 \geq0$, all dimensions $d \geq 2$, and all spin values $S \in \frac12 \bbN$, with the \textit{one exception} of the case $d=2$ and $S = \frac12$. In this case, \eqref{the condition} holds when $-J\scr2 / J\scr1 \in [0,0.109]$.

\begin{table}[htb]
\centering
\colorbox{light-gold}{
\begin{tabular}{c|cc}
$d$ & $I\scr{d}$ & $\tilde I\scr{d}$ \\ 
\hline
2 & 1.393 & 0.6468 \\
%\hline
3 & 1.157 & 0.3499 \\
%\hline
4 & 1.094 & 0.2540 \\
%\hline
%5 & 1.068 & 0.2069 & 3.00931 & $S \leq 1$ \\
%\hline
%6 & 1.053 & 0.1777 & 3.26958 & $S \leq 1$ \\
\end{tabular}
}
\medskip
\caption{Numerical values of the integrals $I\scr{d}$ and $\tilde
  I\scr{d}$ defined in \eqref{I and I}.}
\label{table}
\end{table}

Kubo and Kishi \cite{KK} improved the interval to $[0,0.13]$ and this is the current best result. To do this, they use the variational principle with the constant state $\otimes_{x \in \Lambda_\ell} |\frac12\rangle$ to get a bound on the ground state energy. Combined with the correlation inequalities stated in Lemma \ref{lem corr ineq}, they get a lower bound for $\alpha(\infty) = \lim_{\beta\to\infty} \alpha(\beta)$, namely
\be
\alpha(\infty) \geq \frac{1/4}{2 - J\scr2 / J\scr1}.
\ee
(In \cite{KK} they consider the case $J\scr1=J\scr3 = 1$ but it is
easily extended.) This implies that the second bound of Theorem
\ref{thm LRO nn} is positive in the interval $[0,0.13]$. 

Finally, let us remark that Theorem \ref{thm LRO nn} implies N\'eel long-range order for the Heisenberg antiferromagnet. Indeed, we can rotate the spin operators on a sublattice, by an angle $\pi$  around the 2nd axis of spins. See Proposition \ref{prop corr rotations}. Then all coupling parameters are negative.

\section{Infrared bounds}
\label{sec IRB}

This section explores estimates of the Fourier
transform of correlations and their 
 consequences.   Such estimates are particularly relevant at small
Fourier parameters; this corresponds to large wavelengths, i.e.\ the
infrared spectrum for light, hence the name given by physicists. 

We need to introduce the conventions about the Fourier transform used
in this survey. 
Recall that $\Lambda_\ell^* = \tfrac{2\pi}\ell \bigl\{ -\tfrac\ell2 + 1, \dots, \tfrac\ell2 \bigr\}^d$.
The Fourier transform of a function $f : \Lambda_\ell \to \bbC$ is
\be
\widehat f(k) = \sum_{x \in \Lambda_\ell} \e{-\ii k x} f(x), \qquad k
\in \Lambda_\ell^*,
\ee
where we write $kx$ for the usual inner product $\sum_{i=1}^d k_i x_i$.
One can check that the inverse relation is then
\be
f(x) = \frac1{\ell^d} \sum_{k \in \Lambda_\ell^*} \e{\ii k x} \widehat f(k).
\ee
Note  that $\eps(k) = \widehat J\scr3(0) - \widehat J\scr3(k)$.

The first infrared bound involves the Duhamel correlation function $\eta(x)$, defined by
\be
\label{def Duhamel corr}
\eta(x) = \frac1\beta \frac1{Z_{\textrm{per}}(\Lambda_\ell,\beta,h)} \int_0^\beta \dd s 
\;\Tr S_0\scr3 \e{-s H^{\textrm{per}}_{\Lambda,h}} S_x\scr3 \e{-(\beta-s) H^{\textrm{per}}_{\Lambda,h}}.
\ee
The method of reflection positivity allows us to establish the following
infrared bound.

\begin{lemma}
\label{lem IRB Duhamel}
Let $h=0$ and $\ell$ be even. Assume that the coupling constants $J\scr{i}$ satisfy the
assumptions of Theorem \ref{thm LRO}.
Then 
\[
\widehat\eta(k) \leq \frac1{2\beta\eps(k)},
\qquad \mbox{ for all } k \in \Lambda_\ell^* \setminus \{0\}.
\]
\end{lemma}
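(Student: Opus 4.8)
The plan is to derive the bound from a \emph{Gaussian domination} inequality, which is the reflection-positivity input developed in Section~\ref{sec RP}. First I would introduce a real external field $h=(h_x)_{x\in\Lambda_\ell}$ and a shifted partition function $Z(h)$, obtained by shifting $S_x\scr3\to S_x\scr3+h_x$ inside the \emph{gradient form} of the third-component interaction. Using the identity
\be
\sum_{x,y} J_{x-y,{\textrm{per}}}\scr3 S_x\scr3 S_y\scr3 = -\tfrac12 \sum_{x,y} J_{x-y,{\textrm{per}}}\scr3 \bigl( S_x\scr3 - S_y\scr3 \bigr)^2 + \widehat J\scr3(0) \sum_x \bigl( S_x\scr3 \bigr)^2 ,
\ee
I replace $(S_x\scr3-S_y\scr3)$ by $(S_x\scr3-S_y\scr3)+(h_x-h_y)$ in the quadratic gradient term only, leaving the local term and the transverse ($i=1,2$) couplings untouched. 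This is exactly the shift for which reflection positivity yields
\be
Z(h) \leq Z(0) \qquad \text{for all real } h .
\ee
It is crucial that the field enters through the gradient $(h_x-h_y)$: this is what makes $\eps(k)=\widehat J\scr3(0)-\widehat J\scr3(k)$, rather than $\widehat J\scr3(k)$, appear in the final bound, and it decouples the constant mode, explaining the restriction $k\neq0$.

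Next I would expand $\log Z(h)$ to second order. Writing the exponent as $-\beta H^{\textrm{per}}_{\Lambda_\ell,0} - 2\beta B(h) - \beta D(h)$, where $B(h)=\sum_x u_x S_x\scr3$ is linear in $h$ with $\widehat u(k)=\eps(k)\widehat h(k)$, and $D(h)=\tfrac12\sum_{x,y}J_{x-y,{\textrm{per}}}\scr3 (h_x-h_y)^2$ is the scalar quadratic term, the Duhamel expansion gives, with $(A,B)$ the Duhamel inner product normalised as in \eqref{def Duhamel corr} so that $\eta(x)=(S_0\scr3,S_x\scr3)$,
\be
\frac{\dd}{\dd t}\Big|_{t=0}\log Z(t h) = -2\beta \langle B(h)\rangle^{\textrm{per}}_{\Lambda_\ell,\beta,0} ,
\ee
\be
\frac{\dd^2}{\dd t^2}\Big|_{t=0}\log Z(t h) = 4\beta^2 \bigl( B(h),B(h) \bigr) - 2\beta D(h) .
\ee
The first derivative vanishes because the $h=0$ state is invariant under the $\pi$-rotation sending $S\scr2\to-S\scr2$, $S\scr3\to-S\scr3$ (Lemma~\ref{lem spin rotations}), so $\langle S_x\scr3\rangle^{\textrm{per}}_{\Lambda_\ell,\beta,0}=0$. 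Gaussian domination then forces $t=0$ to be a maximum of $t\mapsto\log Z(th)$, whence the second derivative is $\leq0$ and $\bigl(B(h),B(h)\bigr)\leq \frac1{2\beta}D(h)$.

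Finally I would pass to Fourier variables. Since the Duhamel two-point function is translation invariant, $(S_x\scr3,S_y\scr3)=\eta(x-y)$, so
\be
\bigl( B(h),B(h) \bigr) = \frac1{\ell^d} \sum_{k} \widehat\eta(k)\, \eps(k)^2\, |\widehat h(k)|^2 , \qquad D(h) = \frac1{\ell^d} \sum_{k} \eps(k)\, |\widehat h(k)|^2 .
\ee
Choosing $h$ supported on a single Fourier mode $\pm k$ with $k\neq0$ (so $\eps(k)>0$) collapses the sums to $\widehat\eta(k)\eps(k)^2\leq \frac1{2\beta}\eps(k)$, that is $\widehat\eta(k)\leq\frac1{2\beta\eps(k)}$, as claimed.

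The main obstacle is the Gaussian domination bound $Z(h)\leq Z(0)$ itself, which is where reflection positivity enters and which I would borrow from Section~\ref{sec RP}: reflecting the field through a lattice plane gives a Cauchy--Schwarz estimate $Z(h)^2\leq Z(h^+)\,Z(h^-)$, where $h^\pm$ are the reflection-symmetric fields obtained by copying one half of $h$ onto the other, and iterating over all planes drives $h$ to a constant profile, which is harmless since the gradient shift annihilates constants ($\eps(0)=0$). The delicate point there is that the transverse couplings $J\scr1,J\scr2$ are carried by non-commuting operators, so reflection positivity of the full Hamiltonian requires the sign hypotheses $J\scr3\geq J\scr1\geq -J\scr2\geq0$ together with the specific reflection-positive forms of the couplings listed before Theorem~\ref{thm LRO}; verifying this is precisely the content of Lemma~\ref{lem RP fpart}.
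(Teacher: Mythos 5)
Your proposal is correct and follows essentially the same route as the paper: Gaussian domination obtained from reflection positivity (Lemma~\ref{lem RP fpart} and Corollary~\ref{cor GD}), a second-order expansion of the field-dependent partition function around its maximiser at zero field (the first derivative vanishing by a spin-flip symmetry), and the choice of a single Fourier mode $v_x=\cos(kx)$ to collapse the quadratic forms into $\widehat\eta(k)\,\eps(k)^2\leq\frac{1}{2\beta}\eps(k)$. The only cosmetic differences are that the paper compares Taylor coefficients in the inequality $Z(sv)/Z(0)\leq\e{-\frac14 s^2\beta(v,\Delta v)}$ rather than applying the second-derivative test to $\log\tilde Z$, and that your symmetry argument (the $\pi$-rotation sending $S\scr2\to-S\scr2$, $S\scr3\to-S\scr3$) is the correct one to force $\langle S_x\scr3\rangle=0$.
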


The proof of this lemma can be found at the end of Section \ref{sec RP}.

\subsection{Falk--Bruch inequality}

We cannot use the infrared bound directly on the Duhamel function because of a lack of suitable lower bound for $\eta(0)$. The way out is to derive another bound on the ordinary correlation function. This can be done using the Falk--Bruch inequality, which was proposed independently in \cite{FB} and \cite{DLS}.

Let $\caH$ be a separable Hilbert space, $H$ a bounded hermitian operator such that $\Tr \e{-H} < \infty$, and let $\caB$ denote the space of bounded operators on $\caH$. We define the \textit{Duhamel inner product} in $\caB$ by
\be
\label{def Duh inner}
(A,B) = \frac1Z \int_0^1 \dd s \; \Tr \e{-(1-s) H} A^* \e{-sH} B, \qquad A,B \in \caB,
\ee
with $Z = \Tr \e{-H}$. We have
\be
\begin{split}
\frac\dd{\dd s} \Tr \e{-(1-s) H} A^* \e{-sH} B &= \Tr \e{-(1-s) H} [H,A^*] \e{-sH} B \\
&= \Tr \e{-(1-s) H} A^* \e{-sH} [B,H],
\end{split}
\ee
and we obtain the useful identity
\be
\bigl( [A,H], B \bigr) = \bigl( A, [B,H] \bigr).
\ee
Further,
\be
\label{FB1}
\bigl( A, [B,H] \bigr) = \frac1Z \int_0^1 \dd s \, \frac\dd{\dd s} \Tr \e{-(1-s) H} A^* \e{-sH} B = \bigl\langle [B,A^*] \bigr\rangle
\ee
where
\be
\langle \cdot \rangle = \frac1Z \Tr \cdot \e{-H}.
\ee

For a given $A \in \caB$, let us introduce the function $F(s) =\Tr \e{-(1-s) H} A^* \e{-sH} A$. We have
\be
\label{pos 2nd der}
\frac{\dd^2}{\dd s^2} F(s) = \Tr \e{-(1-s) H} [A,H]^* \e{-sH} [A,H] \geq 0
\ee
(positivity can be shown by casting the right side in the form $\Tr B^* B$). The function $F(s)$ is therefore convex. Then
\be
\label{ineq Duhamel}
\tfrac12 \langle A^* A + A A^* \rangle = \frac1{2Z} \bigl( F(0) + F(1) \bigr) \geq \frac1Z \int_0^1 F(s) \dd s = (A,A)
\ee
with equality if and only if $[A,H]=0$. The Cauchy--Schwarz inequality
of the Duhamel inner product \eqref{def Duh inner} gives 
\be
\label{conseq CS ineq}
\bigl| \bigl( A, [B,H] \bigr) \bigr|^2 \leq (A,A) \; \bigl( [B,H], [B,H] \bigr).
\ee
Using Eq.\ \eqref{FB1} to write the Duhamel inner product of
commutators as expectations in the state $\langle \cdot \rangle$, and
the inequalities \eqref{ineq Duhamel} and \eqref{conseq CS ineq} as
well as cyclicity of the trace, we get \textit{Bogolubov's inequality}
\be
\bigl| \bigl\langle [B,A^*] \bigr\rangle \bigr|^2 \leq \tfrac12 \langle A^* A + A A^* \rangle \; \bigl\langle \bigl[ [B,H], B^* \bigr] \bigr\rangle.
\ee

Inequality \eqref{ineq Duhamel} gives an upper bound for the Duhamel inner product, but we actually need a lower bound. For this, we consider the function
\be
\Phi(s) = \sqrt s \coth\tfrac1{\sqrt s}.
\ee
This function is increasing, concave, and is depicted in Fig.\ \ref{fig Phi}. One can check that
\be
\label{bounds Phi}
\sqrt s \leq \Phi(s) \leq \sqrt s + s.
\ee

\bfig
\includegraphics[width=50mm]{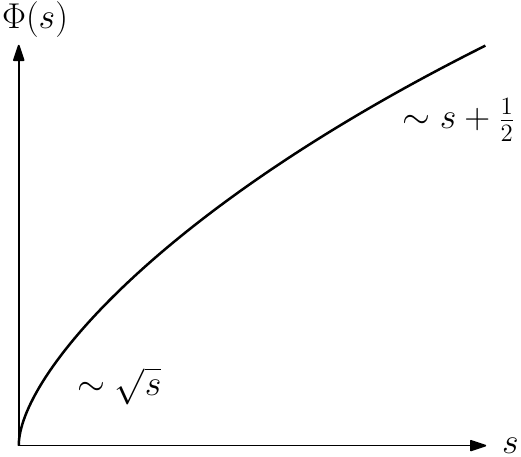}
\caption{The function $\Phi$ of the Falk--Bruch inequality.}
\label{fig Phi}
\efig

\begin{lemma}[Falk--Bruch inequality]
\label{lem FB}
For all $A\in\caB$ such that the denominators differ from zero, we have
\[
\frac{2 \langle A^* A + A A^* \rangle}{\bigl\langle \bigl[ A^*, [H,A] \bigr] \bigr\rangle} \leq \Phi \biggl( \frac{4(A,A)}{\bigl\langle \bigl[ A^*, [H,A] \bigr] \bigr\rangle} \biggr).
\]
\end{lemma}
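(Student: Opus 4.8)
The plan is to diagonalise $H$, reduce the three quantities in the statement to sums over pairs of eigenstates, and then recognise the resulting scalar inequality as a direct consequence of the concavity of $\Phi$. Since $\Tr\e{-H}<\infty$, the bounded hermitian operator $H$ has purely discrete spectrum; I write $H|i\rangle=E_i|i\rangle$ and $A_{ij}=\langle i|A|j\rangle$, and set $\Delta_{ij}=E_i-E_j$. Inserting the spectral resolution into each trace and performing the $s$-integral in the Duhamel product, I would express everything through the nonnegative weights $w_{ij}=\frac1Z g_{ij}|A_{ij}|^2$, where $g_{ij}=(\e{-E_j}-\e{-E_i})/\Delta_{ij}\ge0$ (with limiting value $\e{-E_i}$ when $\Delta_{ij}=0$).

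Using $[H,A]_{ij}=\Delta_{ij}A_{ij}$ together with the elementary identity $\e{-E_i}+\e{-E_j}=(\e{-E_j}-\e{-E_i})\coth(\Delta_{ij}/2)$, and writing $t_{ij}=\Delta_{ij}^2$ and $\phi(t)=\sqrt t\coth\!\big(\tfrac12\sqrt t\big)$ (extended by $\phi(0)=2$), the three quantities become
\[
(A,A)=\sum_{i,j}w_{ij},\qquad \bigl\langle[A^*,[H,A]]\bigr\rangle=\sum_{i,j}w_{ij}t_{ij},\qquad \langle A^*A+AA^*\rangle=\sum_{i,j}w_{ij}\phi(t_{ij}).
\]
In particular the denominator $c:=\bigl\langle[A^*,[H,A]]\bigr\rangle$ is nonnegative, and positive under our hypothesis.

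The decisive point is the exact single-pair identity $t\,\Phi(4/t)=2\phi(t)$, valid for every $t\ge0$; this is precisely what the definition $\Phi(s)=\sqrt s\coth(1/\sqrt s)$ is engineered to achieve. Introducing the perspective function $G(b,c)=c\,\Phi(4b/c)$, the identity reads $G(w_{ij},w_{ij}t_{ij})=2w_{ij}\phi(t_{ij})$, so that $2\langle A^*A+AA^*\rangle=\sum_{i,j}G(w_{ij},w_{ij}t_{ij})$. Since $\Phi$ is concave, $G$ is concave and positively homogeneous of degree one, hence superadditive, $\sum_{i,j}G(x_{ij})\le G\bigl(\sum_{i,j}x_{ij}\bigr)$. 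Applying this to the vectors $x_{ij}=(w_{ij},w_{ij}t_{ij})$ yields $2\langle A^*A+AA^*\rangle\le G\bigl((A,A),c\bigr)=c\,\Phi(4(A,A)/c)$, which is the claim after dividing by $c$.

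I expect the main work to be the spectral bookkeeping of the first two steps — checking that the $s$-integral produces exactly $g_{ij}$ and the double commutator the factor $t_{ij}$ — together with spotting the identity $t\,\Phi(4/t)=2\phi(t)$, which is the conceptual heart. The one place demanding genuine care is the diagonal terms $\Delta_{ij}=0$: these contribute to $\langle A^*A+AA^*\rangle$ and $(A,A)$ but \emph{not} to $c$, so a naive Jensen argument against a measure proportional to $w_{ij}t_{ij}$ would miss them. The superadditive formulation handles them uniformly, since $G(w,0)=\lim_{c\to0}c\,\Phi(4w/c)=4w=2w\phi(0)$ by the asymptotics $\Phi(s)\sim s$; alternatively one separates off the diagonal mass $\lambda\ge0$ and applies the increment bound $\Phi(s+\lambda)\ge\Phi(s)+\lambda$ (equivalently $\Phi'\ge1$, which follows from concavity together with $\Phi(s)\sim s$) on top of plain Jensen.
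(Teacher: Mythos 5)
Your proof is correct, and its core coincides with the paper's: diagonalise $H$, express the three quantities as sums over pairs of eigenvalues weighted by $|A_{ij}|^2$, and exploit concavity of $\Phi$ through the identity $\Phi(4/\Delta^2)=\tfrac2\Delta\coth\tfrac\Delta2$ (your $t\,\Phi(4/t)=2\phi(t)$ with $t=\Delta^2$). The paper phrases the concavity step as Jensen's inequality for the probability measure $\dd\nu(t)\propto t(\e{t}-1)\dd\mu(t)$, where $F(s)=\Tr \e{-(1-s)H}A^*\e{-sH}A=\int\e{st}\dd\mu(t)$; your superadditivity of the perspective $G(b,c)=c\,\Phi(4b/c)$ is the same inequality written in homogeneous form, so on the off-diagonal part the two routes are equivalent. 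The genuine difference is your treatment of the diagonal pairs $\Delta_{ij}=0$, and here your version is the more careful one. These pairs give $\mu$ an atom at $t=0$, which has $\nu$-measure zero; consequently the paper's identities $\frac{F(0)+F(1)}{F'(1)-F'(0)}=\int\frac1t\coth\frac t2\,\dd\nu$ and $\frac{\int_0^1 F\,\dd s}{F'(1)-F'(0)}=\int\frac1{t^2}\,\dd\nu$ are exact only when $\mu(\{0\})=0$: the left-hand sides contain the contributions $2\mu(\{0\})$, resp.\ $\mu(\{0\})$, which the right-hand sides silently drop. Your closing paragraph repairs exactly this: either through the continuous extension $G(w,0)=4w$ (continuity on the closed quadrant follows from $4b\le G(b,c)\le 4b+2\sqrt{bc}$, a consequence of $s\le\Phi(s)\le s+\sqrt s$), or through the increment bound $\Phi(s+\lambda)\ge\Phi(s)+\lambda$, valid since concavity together with $\Phi(s)/s\to1$ forces $\Phi'\ge1$; the direction works out because the diagonal mass enters the left side and the argument of $\Phi$ with the same coefficient $4w_{ij}/c$. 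So: same engine, equivalent concavity step, but your write-up closes a small gap that the paper's own proof leaves open.
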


It is worth noting that the double commutator is nonnegative, as can be seen from Eq.\ \eqref{FB1}. Indeed, taking $A \mapsto [A^*,H]$ and $B \mapsto A^*$, we can express it using the Duhamel inner product as
\be
\label{double comm}
\bigl\langle \bigl[ A^*, [H,A] \bigr] \bigr\rangle = \bigl( [A^*,H], [A^*,H] \bigr) \geq 0.
\ee

\begin{proof}[Proof of Lemma \ref{lem FB}]
Recall the function $F(s)$ defined before \eqref{pos 2nd der}. The Falk--Bruch inequality can be written as
\be
\label{FB ineq too}
2 \frac{F(0)+F(1)}{F'(1)-F'(0)} \leq \Phi \biggl( \frac{4\int_0^1 F(s) \dd s}{F'(1)-F'(0)} \biggr).
\ee
If $\{\varphi_j\}$ is an orthonormal set of eigenvectors of $H$ with eigenvalues $\lambda_j$, we can write
\be
F(s) = \sum_{i,j} \bigl| (\varphi_i, A \varphi_j) \bigr|^2 \e{-\lambda_j} \e{(\lambda_j-\lambda_i)s} = \int_{-\infty}^\infty \e{st} \dd\mu(t),
\ee
where $\mu$ is a positive measure. We have
\be
\begin{split}
&F(0) + F(1) = \int (\e{t}+1) \dd\mu(t), \\
&F'(1) - F'(0) = \int t (\e{t}-1) \dd\mu(t), \\
&\int_0^1 F(s) \dd s = \int \frac{\e{t}-1}t \dd\mu(t).
\end{split}
\ee
Let us consider the probability measure $\dd\nu(t) = ( \int t(\e{t}-1) \dd\mu(t) )^{-1} t (\e{t}-1) \dd\mu(t)$. We have
\be
\begin{split}
&\frac{F(0)+F(1)}{F'(1)-F'(0)} = \int \tfrac1t \coth \tfrac t2 \; \dd\nu(t), \\
&\frac{\int F(s) \dd s}{F'(1)-F'(0)} = \int \tfrac1{t^2} \, \dd\nu(t).
\end{split}
\ee
Since $\Phi$ is concave we can use Jensen's inequality and we get \eqref{FB ineq too}:
\be
\begin{split}
\Phi \biggl( \frac{4\int_0^1 F(s) \dd s}{F'(1)-F'(0)} \biggr) &= \Phi
\Bigl( \int \tfrac4{t^2} \, \dd\nu(t) \Bigr) \geq \int \Phi \bigl(
\tfrac4{t^2} \bigr) \dd\nu(t) \\ 
&= \int \tfrac2t \coth \tfrac t2 \; \dd\nu(t) = 2 \frac{F(0)+F(1)}{F'(1)-F'(0)}.
\end{split}
\ee
\end{proof}

The Falk--Bruch inequality is saturated when the measure $\dd\mu$ is a
Dirac on a single value. This is the case if $H$ is the Hamiltonian of
the harmonic oscillator, and $A$ is the creation or annihilation
operator. 

The following inequality follows from Lemma \ref{lem FB} and the upper bound in Eq.\ \eqref{bounds Phi}.

\begin{corollary}
\label{cor FB}
We have
\[
\tfrac12 \langle A^* A + A A^* \rangle \leq \tfrac12 \sqrt{(A,A) \; \bigl\langle \bigl[ A^*, [H,A] \bigr] \bigr\rangle} + (A,A).
\]
\end{corollary}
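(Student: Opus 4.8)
The plan is to feed the Falk--Bruch inequality of Lemma \ref{lem FB} into the elementary upper bound $\Phi(s) \leq \sqrt s + s$ from \eqref{bounds Phi} and then clear denominators. To keep the bookkeeping transparent, I would first abbreviate the three quantities appearing in the statement: write $b = (A,A)$ for the Duhamel inner product and $c = \langle [A^*,[H,A]] \rangle$ for the double commutator, and note that both are nonnegative --- the former by the definition \eqref{def Duh inner} (it is a Gram-type expression), and the latter by the identity \eqref{double comm}, which expresses $c$ itself as a Duhamel inner product $([A^*,H],[A^*,H])$. Nonnegativity of $c$ is what makes the square root $\sqrt{bc}$ in the conclusion well-defined, and it is worth flagging before doing anything else.

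With this notation, Lemma \ref{lem FB} reads $\frac{2\langle A^*A + AA^*\rangle}{c} \leq \Phi\bigl(\frac{4b}{c}\bigr)$. I would then apply the right-hand inequality of \eqref{bounds Phi} with argument $s = 4b/c$, giving
\[
\frac{2\langle A^*A + AA^*\rangle}{c} \;\leq\; \sqrt{\tfrac{4b}{c}} + \tfrac{4b}{c} \;=\; \frac{2\sqrt b}{\sqrt c} + \frac{4b}{c}.
\]
Multiplying through by $c/4$ collapses everything to $\tfrac12\langle A^*A + AA^*\rangle \leq \tfrac12\sqrt{bc} + b$, which is exactly the claimed inequality once $b$ and $c$ are spelled back out.

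There is essentially no obstacle here: the corollary is a one-line substitution once the Falk--Bruch inequality is in hand, and the only thing to watch is the arithmetic of the factors of $4$ and $\tfrac12$ as the denominator $c$ is cleared. The single conceptual point worth stating explicitly --- so that the bound is not vacuous --- is that the hypothesis ``the denominators differ from zero'' in Lemma \ref{lem FB} means $c > 0$, so that dividing by $c$ and taking $\sqrt c$ are both legitimate; if instead $[H,A] = 0$ then $c = 0$, and the statement should be read as the limiting equality $\tfrac12\langle A^*A + AA^*\rangle = (A,A)$ already supplied by the equality case of \eqref{ineq Duhamel}.
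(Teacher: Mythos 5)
Your proof is correct and follows exactly the paper's route: the paper derives Corollary \ref{cor FB} precisely by combining Lemma \ref{lem FB} with the upper bound $\Phi(s)\leq\sqrt s+s$ of \eqref{bounds Phi}, and your algebra clearing the denominator is right. Your additional remark handling the degenerate case $[H,A]=0$ via the equality case of \eqref{ineq Duhamel} is a sensible clarification the paper leaves implicit.
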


For our purpose we have $H \sim \beta$ and $(A,A) \sim \frac1\beta$ with $\beta$ large, so that this inequality is quite optimal. We use it below since it is simpler.

\subsection{Infrared bound for the usual correlation function}

In the rest of this section $\ell$ and $\beta$ will be fixed, and
we drop the subscripts on $\langle\cdot\rangle^{\textrm{per}}_{\Lambda_\ell,\beta,0}$,
writing simply $\langle \cdot\rangle$.

We introduce Fourier transforms of spin operators. This allows us to write the correlation functions in the form of Corollary \ref{cor FB}. Accordingly, let
\be
\widehat S_k\scr3 = \sum_{x \in \Lambda_\ell} \e{-\ii k x} S_x\scr3, \qquad k \in \Lambda_\ell^*.
\ee
One easily checks the inverse identity
\be
S_x\scr3 = \frac1{\ell^d} \sum_{k \in \Lambda_\ell^*} \e{\ii k x} \widehat S_k\scr3, \qquad x \in \Lambda_\ell.
\ee
The Fourier transform of the usual correlation function is then equal to
\be\label{eq:FT-TP}
\begin{split}
\widehat{\langle S_0\scr3 S_x\scr3 \rangle}(k) &= \sum_{x \in \Lambda_\ell} \e{-\ii k x} \langle S_0\scr3 S_x\scr3 \rangle = \frac1{\ell^d} \sum_{x,y\in\Lambda_\ell} \e{-\ii k (x-y)} \langle S_x\scr3 S_y\scr3 \rangle \\
&= \frac1{\ell^d} \langle \widehat S_{-k}\scr3 \widehat S_k\scr3 \rangle.
\end{split}
\ee 
Notice that 
$(\widehat S_k\scr3)^* = \widehat S_{-k}\scr3$, thus
\be
\label{pos corr fct}
\widehat{\langle S_0\scr3 S_x\scr3 \rangle}(k) \geq 0.
\ee
For the Duhamel correlation function we obtain
\be
\widehat \eta(k) = \widehat{ (S_0\scr3, S_x\scr3) }(k) = \frac1{\ell^d} ( \widehat S_k\scr3, \widehat S_k\scr3 ).
\ee
(There is no $-k$ because the Duhamel inner product involves taking the adjoint.)
Let
\be
\label{def e(k)}
e(k) = \tfrac12 \sum_{x \in \bbZ^d} \Bigl( (J_{x,{\textrm{per}}}\scr1 \! - \! J_{x,{\textrm{per}}}\scr2 \cos kx) \langle S_0\scr1 S_x\scr1 \rangle + (J_{x,{\textrm{per}}}\scr2 \! - \! J_{x,{\textrm{per}}}\scr1 \cos kx) \langle S_0\scr2 S_x\scr2 \rangle \Bigr).
\ee
We will see in the proof of the next lemma that $e(k)\geq0$, as it can be
written as the expectation of a double commutator in the form of Eq.\
\eqref{double comm}.

\begin{lemma}[Infrared bound for the usual correlation function]
\label{lem IRB corr fct}
We have for all $k \in \Lambda_\ell^* \setminus \{0\}$ that
\[
\widehat{\langle S_0\scr3 S_x\scr3 \rangle}(k) \leq \sqrt{\frac{e(k)}{2\eps(k)}} + \frac1{2\beta \eps(k)}.
\]
\end{lemma}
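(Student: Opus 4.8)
The plan is to feed the Fourier-transformed spin operator $A=\widehat S_k\scr3$ into the Falk--Bruch consequence, Corollary \ref{cor FB}, applied with the abstract Hamiltonian taken to be $\beta H^{\textrm{per}}_{\Lambda_\ell,0}$, so that the state $\langle\cdot\rangle$ and the Duhamel inner product \eqref{def Duh inner} coincide with the periodic Gibbs state at inverse temperature $\beta$. First I would identify the three quantities appearing in that corollary. Since all the operators $S_x\scr3$ are mutually commuting (diagonal in the configuration basis), $\widehat S_k\scr3$ and $\widehat S_{-k}\scr3=(\widehat S_k\scr3)^*$ commute, so the left-hand side collapses to $\langle \widehat S_{-k}\scr3 \widehat S_k\scr3\rangle$, which by \eqref{eq:FT-TP} equals $\ell^d\,\widehat{\langle S_0\scr3 S_x\scr3\rangle}(k)$. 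The Duhamel term is $(\widehat S_k\scr3,\widehat S_k\scr3)=\ell^d\widehat\eta(k)$ by definition.

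The heart of the argument is the evaluation of the double commutator $\bigl\langle[\widehat S_{-k}\scr3,[H^{\textrm{per}}_{\Lambda_\ell,0},\widehat S_k\scr3]]\bigr\rangle$, with the extra factor $\beta$ from the abstract Hamiltonian carried along separately. Here I would use that $S_z\scr3$ commutes with every $S_w\scr3$, so only the $i=1,2$ terms of the Hamiltonian \eqref{def ham qu} contribute. Computing $[H,\widehat S_k\scr3]$ with the commutation relations \eqref{spin relations} produces a sum of terms $S_x\scr1 S_y\scr2$ and $S_x\scr2 S_y\scr1$ weighted by phases $\e{-\ii kx}$ and $\e{-\ii ky}$. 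Commuting once more against $\widehat S_{-k}\scr3$ and summing the remaining phases turns each pair into $S_x\scr1 S_y\scr1$ and $S_x\scr2 S_y\scr2$ carrying a factor $1$ or $\cos\bigl(k(x-y)\bigr)$. Taking the expectation and using translation invariance (each displacement $u=x-y$ occurring $\ell^d$ times) matches this against $e(k)$ of \eqref{def e(k)}, and I expect the precise outcome $4\beta\ell^d\,e(k)$. Non-negativity of $e(k)$, needed for the square root, is automatic from the representation \eqref{double comm}.

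Plugging these three identifications into Corollary \ref{cor FB} and dividing by $\ell^d$ gives $\widehat{\langle S_0\scr3 S_x\scr3\rangle}(k)\leq \sqrt{\beta\,\widehat\eta(k)\,e(k)}+\widehat\eta(k)$. Finally I would insert the Duhamel infrared bound $\widehat\eta(k)\leq \frac1{2\beta\eps(k)}$ of Lemma \ref{lem IRB Duhamel}, legitimate because both the square root and the linear term are monotone in $\widehat\eta(k)$: in the square-root term the factors of $\beta$ cancel to leave $\sqrt{e(k)/(2\eps(k))}$, while the additive term becomes $\frac1{2\beta\eps(k)}$, which is exactly the claimed inequality.

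The main obstacle is the bookkeeping in the double-commutator computation: tracking the four operator terms generated at each commutation, matching the phases so that precisely the $\cos\bigl(k(x-y)\bigr)$ weights of \eqref{def e(k)} appear, and pinning down the combinatorial constant as $4\ell^d$ so that the powers of $\beta$ cancel correctly against the Duhamel bound in the last step.
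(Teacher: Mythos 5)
Your proposal is correct and follows essentially the same route as the paper: applying Corollary \ref{cor FB} with $A=\widehat S_k\scr3$ and $H=\beta H^{\textrm{per}}_{\Lambda_\ell,0}$, evaluating the double commutator as $4\beta\ell^d e(k)$ (with the $i=3$ terms dropping out and translation invariance producing the $\ell^d$), and then inserting the Duhamel bound of Lemma \ref{lem IRB Duhamel}. The details you flag as bookkeeping (the phases combining into $\cos\bigl(k(x-y)\bigr)$ weights and the constant $4\beta\ell^d$) work out exactly as you anticipate, so nothing is missing.
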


\begin{proof}
We take $A = \widehat S_k\scr3$ and $H = \beta H^{\textrm{per}}_{\Lambda,0}$ in Corollary \ref{cor FB}. We need to calculate the double commutator. First, we have
\be
\begin{split}
[H^{\textrm{per}}_{\Lambda,0}, \widehat S_k\scr3] &= \sum_{x \in \Lambda_\ell} [H^{\textrm{per}}_{\Lambda,0}, S_x\scr3] \e{-\ii kx}\\
&= -\sum_{i=1}^3 \sum_{x,y,z \in \Lambda_\ell} \e{-\ii kx} J_{y-z,{\textrm{per}}}\scr{i} [S_y\scr{i} S_z\scr{i}, S_x\scr3] \\
&= -2\ii \sum_{x,y\in\Lambda_\ell} \e{-\ii kx} \Bigl( -J_{x-y,{\textrm{per}}}\scr1 S_x\scr2 S_y\scr1 + J_{x-y,{\textrm{per}}}\scr2 S_x\scr1 S_y\scr2 \Bigr).
\end{split}
\ee
We used the fact that operators at different sites commute, and also
that $J_x\scr{i} = J_{-x}\scr{i}$. Next, 
\be
\begin{split}
\bigl[ \widehat S_{-k}\scr3, [H^{\textrm{per}}_{\Lambda,0}, \widehat S_k\scr3] \bigr] &= -2\ii \sum_{x,y\in\Lambda_\ell} \e{-\ii k x} \Bigl[ \e{\ii kx} S_x\scr3 + \e{\ii ky} S_y\scr3, -J_{x-y,{\textrm{per}}}\scr1 S_x\scr2 S_y\scr1 \\
& \hspace{58mm} + J_{x-y,{\textrm{per}}}\scr2 S_x\scr1 S_y\scr2 \Bigr] \\
&= 2 \sum_{x,y\in\Lambda_\ell} \Bigl( \bigl( J_{x-y,{\textrm{per}}}\scr1 -
\cos(k(x-y)) J_{x-y,{\textrm{per}}}\scr2 \bigr) S_x\scr1 S_y\scr1 \\
&\hspace{15mm} + \bigl( J_{x-y,{\textrm{per}}}\scr2 - \cos(k(x-y)) J_{x-y,{\textrm{per}}}\scr1 \bigr) S_x\scr2 S_y\scr2 \Bigr).
\end{split}
\ee
Taking the expectation in the Gibbs state, we obtain
\be
\bigl\langle
\bigl[ A^*, [H,A] \bigr]
\bigr\rangle
=\bigl\langle
\bigl[ \hat S\scr3_{-k}, [\beta H^{\textrm{per}}_{\Lambda,0}, \hat S\scr3_k] \bigr]
\bigr\rangle
 = 4\beta \ell^d e(k).
\ee
We also see that $e(k)\geq0$ from Eq.\ \eqref{double comm}. Lemma \ref{lem IRB corr fct} follows from Corollary \ref{cor FB} and from the infrared bound on the Duhamel correlation function, Lemma \ref{lem IRB Duhamel}.
\end{proof}

We can now prove the occurrence of long-range order.

\begin{proof}[Proof of Theorem \ref{thm LRO}]
We have the inequality (see Lemma \ref{lem corr ineq})
\be
\langle S_0\scr3 S_0\scr3 \rangle
\geq
\tfrac13\textstyle\sum_{i=1}^3 
\langle S_0\scr{i} S_0\scr{i} \rangle = \tfrac13 S(S+1).
\ee
This is where we use that 
$J_x\scr3\geq J_x\scr1 %\geq 0$ and $J\scr3(x)
\geq -J_x\scr2 \geq0$.

We now use the inverse Fourier transform on the two-point correlation
function, namely 
\be
\tfrac13 S(S+1) \leq \langle S_0\scr3 S_0\scr3 \rangle =
\frac1{\ell^d} \widehat{ \langle S_0\scr3 S_x\scr3 \rangle }(0) +
\frac1{\ell^d} \sum_{k \in \Lambda_\ell^* \setminus \{0\}} \widehat{
  \langle S_0\scr3 S_x\scr3 \rangle }(k). 
\ee
Notice that the first term of the right side is equal to the long-range order parameter. Then
\be
\frac1{\ell^d} \sum_{x\in\Lambda_\ell} \langle S_0\scr3 S_x\scr3
\rangle =
\frac1{\ell^d} \widehat{ \langle S_0\scr3 S_x\scr3 \rangle }(0) \geq \tfrac13 S(S+1) - \frac1{\ell^d} \sum_{k \in \Lambda_\ell^* \setminus \{0\}} \widehat{ \langle S_0\scr3 S_x\scr3 \rangle }(k).
\ee
We can bound the last term with the help of Lemma \ref{lem IRB corr
  fct}, which gives Theorem \ref{thm LRO}. 
\end{proof}

\begin{proof}[Proof of Theorem \ref{thm LRO nn}]
With nearest-neighbour interactions the function $e(k)$ can be written as
\be
e(k) = \alpha_\ell(\beta) \sum_{i=1}^d (1 + r \cos k_i),
\ee
where
\be
r = \frac{-J\scr2 \langle S_0\scr1 S_{e_1}\scr1 \rangle
- J\scr1 \langle S_0\scr2 S_{e_1}\scr2 \rangle}{J\scr1 \langle
S_0\scr1 S_{e_1}\scr1 \rangle + J\scr2 \langle S_0\scr2 S_{e_1}\scr2 \rangle}.
\ee
Here $e_1=(1,0,\dotsc,0)\in\ZZ^d$ is the unit vector in the first direction.
It follows from the fact that $e(k) \geq 0$ for all $k$, that $r \in [-1,1]$. Let
\be
I_\ell\scr{d}(r) =  \frac1{\ell^d} 
\sum_{k \in \Lambda_\ell^* \setminus \{0,\pi\}} 
\sqrt{\frac{\sum_{i=1}^d (1 + r \cos k_i)}{\sum_{i=1}^d (1 - \cos k_i)}}
\ee
where we have omitted the term $\frac1{\ell^d}\sqrt{1-r}$ for
$k=\pi=(\pi,\pi,\dotsc,\pi)$.  Adding it back 
and bounding it by $\sqrt{2}/\ell^d$,
the lower bound is 
\be
\frac1{\ell^d} \sum_{x\in\Lambda_\ell} 
\langle S_0\scr3 S_x\scr3 \rangle \geq \tfrac13 S(S+1) - 
\tfrac12 \sqrt{\alpha_\ell(\beta)} \, (I_\ell\scr{d}(r) +\tfrac{\sqrt{2}}{\ell^d})- 
\frac1{2\beta \ell^d} \sum_{k \in \Lambda_\ell^* \setminus \{0\}} \frac1{\eps(k)}.
\ee
%Let $I\scr{d}(r)$ denote the limit of $I_\ell\scr{d}(r)$ as
%$\ell\to\infty$. 
Observe that $I_\ell\scr{d}(r)$ is concave with respect to $r$ and
that its derivative at $r=1$ is equal to 
\be
\frac\dd{\dd r} I_\ell\scr{d}(r) \Big|_{r=1} =  \frac1{\ell^d} 
\sum_{k\in\Lambda^\ast_\ell\setminus\{0,\pi\}} 
\frac{\sum_{i=1}^d \cos k_i}{\sqrt{\sum_{i=1}^d (1 - \cos k_i) \, \sum_{i=1}^d (1 + \cos k_i)}}.
\ee
%\be
%\frac\dd{\dd r} I\scr{d}(r) \Big|_{r=1} =  \frac1{2(2\pi)^d} \int_{[-\pi,\pi]^d} \dd k \frac{\sum_{i=1}^d \cos k_i}{\sqrt{\sum_{i=1}^d (1 - \cos k_i) \, \sum_{i=1}^d (1 + \cos k_i)}}.
%\ee
This is equal to zero, as can be seen with the change of variables $k \mapsto k + (\pi,\dots,\pi)$. Then $I_\ell\scr{d}(r) \leq I_\ell\scr{d}(1) = I_\ell\scr{d}$.
Using this with the lower bound of Theorem \ref{thm LRO}, we obtain
the first bound of Theorem \ref{thm LRO nn}. 
%PROBLEM HERE: WE CLAIM
%THE RESULT FOR FINITE $\ell$. 

For the second bound, we follow \cite{KLS1} and use the inverse
Fourier transform.
In what follows, $x$ is the dummy variable summed over inside the Fourier
transform.  We have
\be
\begin{split}
\langle S_0\scr3 S_{e_1}\scr3 \rangle &=
\frac1{d\ell^d} \sum_{k \in \Lambda_\ell^*}
\sum_{i=1}^d \e{\ii k_i} \widehat{\langle S_0\scr3 S_x\scr3 \rangle}(k) \\
&= \frac1{\ell^d}  \widehat{\langle S_0\scr3 S_x\scr3 \rangle}(0) + 
\frac1{\ell^d} \sum_{k \in \Lambda_\ell^* \setminus \{0\}}  
\widehat{\langle S_0\scr3 S_x\scr3 \rangle}(k)
\Big( \frac1d \sum_{i=1}^d \cos k_i\Big).
\end{split}
\ee
We used lattice symmetries and the fact that $ \widehat{\langle
  S_0\scr3 S_x\scr3 \rangle}\geq0$, see Eq.\ \eqref{pos corr fct}.  We have 
\be
\begin{split}
\frac1{\ell^d} & \widehat{\langle S_0\scr3 S_x\scr3 \rangle}(0)
\geq \langle S_0\scr3 S_{e_1}\scr3 \rangle - \frac1{\ell^d} \sum_{k \in
  \Lambda_\ell^* \setminus \{0\}}  
\widehat{\langle S_0\scr3 S_x\scr3 \rangle}(k) \Bigl( \frac1d \sum_{i=1}^d \cos k_i \Bigr)_+ \\
&\geq \langle S_0\scr3 S_{e_1}\scr3 \rangle - 
\frac1{\ell^d} \sum_{k \in \Lambda_\ell^* \setminus \{0\}}  
\Bigl( \frac1d \sum_{i=1}^d \cos k_i \Bigr)_+ \biggl[
\sqrt{\frac{e(k)}{2\eps(k)}} + \frac1{2\beta \eps(k)} \biggr]. 
\end{split}
\ee
Proceeding with $e(k)$ as we did with the first lower bound, we get
\be
\frac1{\ell^d}  \widehat{\langle S_0\scr3 S_x\scr3 \rangle}(0) 
\geq \langle S_0\scr3 S_{e_1}\scr3 \rangle - \tfrac12 \sqrt{\alpha_\ell(\beta)} \tilde I_\ell\scr{d}(r) - \frac1{2\beta \ell^d} \sum_{k \in \Lambda_\ell^* \setminus \{0\}}  \frac1{\eps(k)},
\ee
where
\be
\tilde I_\ell\scr{d}(r) = \frac1{\ell^d} \sum_{k \in \Lambda_\ell^* \setminus \{0\}} \sqrt{\frac{\sum_{i=1}^d (1 + r \cos k_i)}{\sum_{i=1}^d (1 - \cos k_i)}} \Bigl( \frac1d \sum_{i=1}^{d} \cos k_{i} \Bigr)_{+}.
\ee
One easily checks that the derivative of $\tilde I_\ell\scr{d}(r)$ is positive, so it is smaller than $\tilde I_\ell\scr{d}(1) = \tilde I_\ell\scr{d}$. Finally, using Lemma \ref{lem corr ineq}, we have
\be
\langle S_0\scr3 S_{e_1}\scr3 \rangle
\geq \frac{\alpha_\ell(\beta)}{1 - J\scr2 / J\scr1}.
\ee
The second lower bound of Theorem \ref{thm LRO nn} follows.
\end{proof}

\section{Reflection positivity}
\label{sec RP}

Let $\caH$ be a separable Hilbert space, and 
%let $\boldsymbol\caH = \caH \otimes \caH$. We
let $\caB_{\textrm{left}}$, resp.\ $\caB_{\textrm{right}}$, denote the space of
bounded operators on $\caH\otimes\caH$ that are of the form $a \otimes
\bbone$, resp.\ $\bbone \otimes a$, for some $a \in \caB(\caH)$. Let
$\caR$ denote the automorphism of $\caB(\caH\otimes\caH)$
such that  
\be
\begin{split}
&\caR (a \otimes \bbone) = \bbone \otimes a, \\
&\caR (\bbone \otimes a) = a \otimes \bbone.
\end{split}
\ee

Let us fix an orthonormal basis $\{e_i\}$ on $\caH$, and define the complex conjugate $\overline a$ of a bounded operator $a$ by
\be
\langle e_i, \overline a e_j \rangle = \overline{\langle e_i, a e_j \rangle}.
\ee
In matrix notation, that means taking the complex conjugate of its elements, without transposing as for hermitian adjoints. The reason to use the complex conjugate is that for all $a,b \in \caB(\caH)$, we have
\be
\label{prod complex conj}
\overline{a \, b} = \overline a \; \overline b.
\ee

Here is the key inequality that is closely related to reflection
positivity. Let $I$ be an index set and $\mu$ a positive, finite
measure on $I$. We assume that $A, C_i \in \caB_{\textrm{left}}$ and
$B, D_i \in \caB_{\textrm{right}}$ for all $i \in I$.

\begin{lemma}
\label{lem RP}
We have
\[
\Bigl| \Tr \e{A + B + \int C_i D_i \dd\mu(i)} \Bigr|^2 \leq \Tr \e{A + \caR \overline A + \int C_i \caR \overline C_i \dd\mu(i)} \cdot \Tr \e{\caR \overline B + B + \int \caR \overline D_i \, D_i \dd\mu(i)}.
\]
\end{lemma}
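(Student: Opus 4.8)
The plan is to realise all three traces as a single Lie--Trotter expansion and to read off the inequality from a term-by-term Cauchy--Schwarz estimate. Write $\Theta=\caR\,\overline{(\,\cdot\,)}$ for the anti-linear reflection, which maps $\caB_{\textrm{left}}$ to $\caB_{\textrm{right}}$ and back, with $\Theta(a\otimes\bbone)=\bbone\otimes\overline a$ and $\Theta^2=\mathrm{id}$; the two right-hand exponents are then $A+\Theta A+\int C_i\Theta C_i\,\dd\mu(i)$ and $\Theta B+B+\int\Theta D_i\,D_i\,\dd\mu(i)$. By approximating $\mu$ weakly by finitely supported measures and using continuity of $X\mapsto\Tr\e{X}$, I would first assume $\int C_iD_i\,\dd\mu(i)=\sum_j C_jD_j$ is a finite sum (weights absorbed into the $C_j$); alternatively one keeps the integral and runs the argument in $L^2(\mu)$. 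By the Lie--Trotter formula,
\[
\Tr\e{A+B+\sum_j C_jD_j}=\lim_{N\to\infty}\Tr\Bigl(\e{A/N}\e{B/N}\bigl(\bbone+\tfrac1N\textstyle\sum_j C_jD_j\bigr)\Bigr)^{\!N},
\]
the replacement of $\e{(\sum_jC_jD_j)/N}$ by its first-order term being harmless in the limit.

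Next I would expand the $N$-fold product. Each factor is either $\e{A/N}\e{B/N}$ or $\tfrac1N\e{A/N}\e{B/N}C_jD_j$, so a monomial is indexed by the set $S\subseteq\{1,\dots,N\}$ of slots carrying an insertion together with a labelling $\mathbf j$ of those slots. Since $\caB_{\textrm{left}}$ and $\caB_{\textrm{right}}$ commute and $(\,\cdot\otimes\bbone)(\bbone\otimes\,\cdot)=\,\cdot\otimes\,\cdot$, each monomial factorises as $\bigl(\prod L\bigr)\otimes\bigl(\prod R\bigr)$ with $L\in\{\e{\alpha/N},\e{\alpha/N}c_j\}$ and $R\in\{\e{\beta/N},\e{\beta/N}d_j\}$ the single-copy pieces (here $A=\alpha\otimes\bbone$, $B=\bbone\otimes\beta$, $C_j=c_j\otimes\bbone$, $D_j=\bbone\otimes d_j$). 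Using $\Tr_{\caH\otimes\caH}(P\otimes Q)=(\Tr P)(\Tr Q)$, the trace becomes a sum over $(S,\mathbf j)$ of products $(\Tr\prod L)(\Tr\prod R)$. Setting $f(S,\mathbf j)=N^{-|S|/2}\Tr\prod L$ and $g(S,\mathbf j)=N^{-|S|/2}\Tr\prod R$, Cauchy--Schwarz over the finite index set gives
\[
\Bigl|\Tr\bigl(\cdots\bigr)^{N}\Bigr|^2=\Bigl|\sum_{S,\mathbf j}f\,g\Bigr|^2\le\Bigl(\sum_{S,\mathbf j}|f|^2\Bigr)\Bigl(\sum_{S,\mathbf j}|g|^2\Bigr).
\]

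It then remains to recognise the two factors on the right as the Trotter expansions of the target traces. For the first, $|f|^2=N^{-|S|}(\Tr\prod L)(\Tr\prod\overline L)=N^{-|S|}\Tr_{\caH\otimes\caH}\prod(L\otimes\overline L)$, using $\overline{\Tr X}=\Tr\overline X$, the multiplicativity $\overline{XY}=\overline X\,\overline Y$ of the basiswise conjugation, and $(\prod L)\otimes(\prod\overline L)=\prod(L\otimes\overline L)$. An inserted slot gives $L\otimes\overline L=(\e{\alpha/N}\otimes\e{\overline\alpha/N})(c_j\otimes\overline{c_j})=\e{A/N}\e{\Theta A/N}\,C_j\Theta C_j$, since $c_j\otimes\overline{c_j}=(c_j\otimes\bbone)(\bbone\otimes\overline{c_j})=C_j\caR\overline{C_j}$; an empty slot gives $\e{A/N}\e{\Theta A/N}$. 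Hence $\sum_{S,\mathbf j}|f|^2=\Tr(\e{A/N}\e{\Theta A/N}(\bbone+\tfrac1N\sum_j C_j\Theta C_j))^N\to\Tr\e{A+\Theta A+\sum_j C_j\Theta C_j}$, the first right-hand trace. The second factor is identical once one uses $\Tr(P\otimes Q)=\Tr(Q\otimes P)$ to place the conjugate on the first copy, so an inserted slot yields $\overline R\otimes R=\e{\Theta B/N}\e{B/N}\,\Theta D_j\,D_j$ and the sum reassembles to $\Tr\e{\Theta B+B+\sum_j\Theta D_j\,D_j}$. Letting $N\to\infty$ and restoring $\mu$ gives the claim.

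The routine points to keep honest are Lie--Trotter convergence together with the legitimacy of discarding the $O(N^{-2})$ corrections, the measure approximation (or the $L^2(\mu)$ version of Cauchy--Schwarz), and, in infinite dimensions, the trace-class hypotheses making every trace finite; in the intended application $\caH$ is finite-dimensional and these are immediate. The one genuinely delicate step, and the heart of the argument, is the reassembly above: one must track which tensor factor carries the complex conjugate so that the reflection $\caR$ recombines $c_j\otimes\overline{c_j}$ into $C_j\caR\overline{C_j}$ and $\overline{d_j}\otimes d_j$ into $\caR\overline{D_j}\,D_j$, turning the two sums of squares produced by Cauchy--Schwarz into exactly the two reflected partition functions.
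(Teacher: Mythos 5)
Your proof is correct, and its skeleton is the same as the paper's: expand the exponential into terms that factorise as (left operator)\,$\otimes$\,(right operator), factorise the trace of each term, apply Cauchy--Schwarz over the expansion index, and reassemble the two sums of squares into the reflected traces via $\overline{\Tr X}=\Tr \overline X$ and $\overline{XY}=\overline X\,\overline Y$. The only genuine difference is the expansion used. The paper invokes the Duhamel formula \eqref{def Duhamel} with the splitting $X=A+B$, $Y=\int C_iD_i\,\dd\mu(i)$, so the index is $(n,\bsi,\bst)$ --- the number of insertions, the labels in $I^n$ weighted by $\mu^{\otimes n}$, and the insertion times in the simplex $0<t_1<\dots<t_n<1$ --- and Cauchy--Schwarz is applied once, in $L^2$ of that positive measure. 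This handles a general finite measure $\mu$ natively and yields the inequality in a single pass, with no limits. Your Lie--Trotter route is the discrete counterpart (your sum over slot-subsets $S$ with $|S|=n$ is a Riemann-sum version of the simplex integral), and it trades the Duhamel formula for a more elementary toolkit at the cost of two extra limiting arguments: the Trotter limit together with the first-order truncation of $\e{(\sum_j C_jD_j)/N}$, and the reduction of $\mu$ to finitely supported measures --- or your $L^2(\mu)$ variant, which is in effect exactly the paper's bookkeeping. All of these are routine in the finite-dimensional setting where the lemma is applied. The step you rightly single out as the heart of the argument --- tracking which tensor factor carries the conjugate so that $c_j\otimes\overline{c_j}=C_j\caR\overline{C_j}$ and $\overline{d_j}\otimes d_j=\caR\overline{D_j}\,D_j$ --- is handled in precisely the same way in the paper.
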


\begin{proof}
We use the Duhamel formula in the following form. If $A,B$ are bounded operators, then
\be
\label{def Duhamel}
\e{A+B} = \sum_{n\geq0} \int_{0<t_1<\dots<t_n<1} \dd t_1 \dots \dd t_n \e{t_1 A} B \e{(t_2-t_1) A} B \dots B \e{(1-t_n) A}.
\ee
In what follows, we use the shorthands
\be
\int \dd\bsi \equiv \int\dd\mu(i_1) \dots \int\dd\mu(i_n) \qquad \text{and} \qquad \int\dd\bst \equiv \int_{0<t_1<\dots<t_n<1} \dd t_1 \dots \dd t_n.
\ee
We also write $A = a \otimes \bbone$, $B = \bbone \otimes b$, $C_i = c_i \otimes \bbone$, and $D_i = \bbone \otimes d_i$. Then
\be
\begin{split}
&\Bigl| \Tr_{\caH\otimes\caH} \e{A + B + \int C_i D_i \dd\mu(i)} \Bigr|^2 \\
&= \Bigl| \sum_{n\geq0} \int\dd\bsi \int\dd\bst\; \Tr_{\caH\otimes\caH} \e{t_1(A + B)} C_{i_1} D_{i_1} \dots C_{i_n} D_{i_n} \e{(1-t_n) (A + B)} \Bigr|^2 \\
&= \Bigl| \sum_{n\geq0} \int\dd\bsi \int\dd\bst \;\Tr_{\caH} \e{t_1 a} c_{i_1} \dots c_{i_n} \e{(1-t_n) a} \Tr_{\caH} \e{t_1 b} d_{i_1} \dots d_{i_n} \e{(1-t_n) b} \Bigr|^2 \\
&\leq \sum_{n\geq0} \int\dd\bsi \int\dd\bst \;\Tr_{\caH} \e{t_1 a} c_{i_1} \dots c_{i_n} \e{(1-t_n) a} \Tr_{\caH} \e{t_1 \overline a} \overline c_{i_1} \dots \overline c_{i_n} \e{(1-t_n) \overline a} \\
&\qquad \cdot \sum_{n\geq0} \int\dd\bsi \int\dd\bst \; \Tr_{\caH} \e{t_1 \overline b} \overline d_{i_1} \dots \overline d_{i_n} \e{(1-t_n) \overline b} \Tr_{\caH} \e{t_1 b} d_{i_1} \dots d_{i_n} \e{(1-t_n) b} \\
&= \Tr_{\caH\otimes\caH} \e{A + \caR \overline A + \int C_i \caR \overline C_i \dd\mu(i)} \cdot \Tr_{\caH\otimes\caH} \e{\caR \overline B + B + \int \caR \overline D_i \,D_i \dd\mu(i)}.
\end{split}
\ee
We used the ordinary Cauchy--Schwarz inequality for functions, here
with argument $(n,\bsi,\bst)$. The complex conjugate was written with
the help of \eqref{prod complex conj}. 
\end{proof}

We now derive the infrared bound for the Duhamel correlation function,
Lemma \ref{lem IRB Duhamel}. 
In the rest of this Section, we fix an even integer $\ell$ and
consider periodic couplings \eqref{eq:J-per}.  Recall that
$\Lambda_\ell=\{0,1,\dotsc,\ell-1\}^d$. 
Let $\Delta$ denote the discrete Laplacian from the coupling constant
$J\scr3_{{\textrm{per}}}$, which acts on a field $v = (v_x) \in \bbR^\Lambda$ as
\be
(\Delta v)_x = \sum_{y \in \Lambda_\ell} 
J\scr3_{x-y,{\textrm{per}}} (v_y - v_x).
\ee

Notice the following identity, which is a discrete version of $\int f(-\Delta g) = \int \nabla f \nabla g$ for functions:
\be
\label{discrete grad}
(u, -\Delta v) = \tfrac12 \sum_{x,y \in \Lambda_\ell} J\scr3_{x-y,{\textrm{per}}} (u_x-u_y) (v_x-v_y).
\ee
In the left side, $(\cdot,\cdot)$ stands for the usual inner product on $\bbR^{\Lambda_\ell}$, i.e.\ $(u,v) = \sum_{x\in\Lambda_\ell} u_x v_x$.
We introduce the following partition function that depends on a field
$v$: 
\be
Z(v) = \Tr \e{-\beta H(v)},
\ee
with Hamiltonian given by
\be
H(v) = H^{\textrm{per}}_{\Lambda_\ell,0} - \sum_{x \in \Lambda_\ell} h_x S_x\scr3,
\ee
where the local magnetic field is obtained from $v$ by
\be
h_x = (\Delta v)_x.
\ee
Let
\be
\label{def Z'}
\tilde Z(v) = \e{\frac14 \beta (v,\Delta v)} Z(v).
\ee
We show that $\tilde Z(v)$ is maximised by the field $v \equiv 0$, which is the key to proving Lemma \ref{lem IRB Duhamel}.

Let $\caR$ denote a reflection across a plane cutting through edges. Namely, given a direction $i = 1,\dots,d$ and a half integer $\epsilon \in \{\frac12, \frac32, \dots, \frac{\ell-1}2 \}$, let $\caR$ be the bijection $\Lambda_\ell \to \Lambda_\ell$ such that
\be
\caR x = x + 2(\epsilon-x_i) e_i.
\ee
Let
\be
\begin{split}
\Lambda_{\textrm{left}} = \{ x \in \Lambda_\ell : \epsilon - \frac\ell2 < x_i < \epsilon \}, \qquad
\Lambda_{\textrm{right}} = \{ x \in \Lambda_\ell : \epsilon < x_i < \epsilon + \frac\ell2 \}.
\end{split}
\ee
Given a field $v_1 \in \bbR^{\Lambda_{\textrm{left}}}$, let $(\caR v_1)_x = (v_1)_{\caR x} \in \bbR^{\Lambda_{\textrm{right}}}$.

\begin{lemma}
\label{lem RP fpart}
Let the couplings $J\scr{i}$ satisfy the assumptions of Theorem
\ref{thm LRO}.  Then, 
for any $v_1 \in \bbR^{\Lambda_{\textrm{left}}}$ and 
$v_2 \in \bbR^{\Lambda_{\textrm{right}}}$, we have
\[
\tilde Z(v_1,v_2)^2 \leq \tilde Z(v_1,\caR v_1) \, \tilde Z(\caR v_2, v_2).
\]
\end{lemma}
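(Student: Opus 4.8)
The plan is to cast the claim into the form handled by Lemma~\ref{lem RP}. A direct splitting of $-\beta H(v)$ fails because the field $h_x=(\Delta v)_x$ is long-range: for $x\in\Lambda_{\textrm{left}}$ it already involves the values of $v$ on $\Lambda_{\textrm{right}}$, so the $v$-dependence does not sit on the correct side of the reflection. I would first remove this by completing the square in the third direction, combining the $J\scr3$-term of $H^{\textrm{per}}_{\Lambda_\ell,0}$ with the field and using the gradient identity \eqref{discrete grad}. Writing $b_x=S_x\scr3+\tfrac12 v_x$, this gives
\be
\begin{split}
-\beta H(v) = {}& \beta\sum_{i=1,2}\sum_{x,y} J\scr{i}_{x-y,{\textrm{per}}} S_x\scr{i} S_y\scr{i} + \beta\Bigl(\textstyle\sum_z J\scr3_{z,{\textrm{per}}}\Bigr)\sum_x (S_x\scr3)^2 \\
& - \tfrac12\beta\sum_{x,y} J\scr3_{x-y,{\textrm{per}}} (b_x-b_y)^2 - \tfrac14\beta(v,\Delta v).
\end{split}
\ee
Calling $W(v)$ the sum of the first three (operator) terms, the scalar $-\tfrac14\beta(v,\Delta v)$ is exactly the factor in the definition of $\tilde Z$, so that $\Tr\e{W(v)}=\tilde Z(v)$. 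The crucial gain is that $v$ now enters $W(v)$ only locally, through $b_x$ at the single site $x$.

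I would then split $W(v)=A+B+(\text{cross})$ according to whether a pair $\{x,y\}$ lies in $\Lambda_{\textrm{left}}$, in $\Lambda_{\textrm{right}}$, or straddles the plane, placing all single-site contributions (the diagonal $(S_x\scr3)^2$ and the $b_x^2$ pieces produced when expanding the straddling $(b_x-b_y)^2$) into $A\in\caB_{\textrm{left}}$ or $B\in\caB_{\textrm{right}}$. Because $v$ is now local, $A$ depends only on $v_1$ and $B$ only on $v_2$. Substituting $y=\caR x'$ with $x'\in\Lambda_{\textrm{left}}$, the cross part becomes $\sum_{i=1}^3 2\beta\sum_{x,x'\in\Lambda_{\textrm{left}}} J\scr{i}_{x-\caR x',{\textrm{per}}}\,\mathcal O_x\scr{i}\,\mathcal O_{\caR x'}\scr{i}$, with $\mathcal O\scr{i}=S\scr{i}$ for $i=1,2$ and $\mathcal O\scr3=b$; here $\mathcal O_x\scr{i}\in\caB_{\textrm{left}}$ and $\mathcal O_{\caR x'}\scr{i}\in\caB_{\textrm{right}}$.

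The heart of the argument is to exhibit each such bilinear form as an integral $\int C_i D_i\,\dd\mu(i)$ with a \emph{positive} measure. Writing $\epsilon_1=\epsilon_3=+1$ and $\epsilon_2=-1$ for the sign in $\overline{S\scr{i}}=\epsilon_i S\scr{i}$, a spectral decomposition shows this is possible precisely when the matrix $\bigl[\epsilon_i J\scr{i}_{x-\caR x',{\textrm{per}}}\bigr]_{x,x'\in\Lambda_{\textrm{left}}}$ is positive semidefinite; its eigenpairs $(\mu_\alpha,\psi_\alpha)$, $\mu_\alpha\ge0$, furnish $C_\alpha=\sqrt{\mu_\alpha}\sum_x\psi_\alpha(x)\mathcal O_x\scr{i}$ and $D_\alpha=\epsilon_i\sqrt{\mu_\alpha}\sum_{x'}\psi_\alpha(x')\mathcal O_{\caR x'}\scr{i}$. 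The sign hypotheses $J\scr1,J\scr3\ge0$ and $-J\scr2\ge0$ give these matrices nonnegative entries, which is necessary; positive semidefiniteness is automatic for nearest neighbours (the matrix is diagonal) and is the genuinely new point for long-range couplings. For Yukawa couplings I would factorise $\e{-b\|x-\caR x'\|_1}$, using $\|z\|_1=\sum_j|z_j|$, into a Hadamard product of one-dimensional positive-definite kernels $\e{-b|x_j-x_j'|}$ in the directions parallel to the plane, times the rank-one factor $\e{-b(d_x+d_{x'})}$ in the reflected direction (with $d_x=\epsilon-x_i>0$), and invoke Schur's product theorem. The power law $\|z\|_1^{-s}$ is a positive superposition of Yukawa kernels via $\lambda^{-s}\propto\int_0^\infty t^{s-1}\e{-\lambda t}\dd t$, hence also reflection positive; for the random-walk couplings I would recognise $J\scr{i}_{x-\caR x'}$ as a matrix element of the resolvent $(1-\tfrac{\lambda}{2d}A)^{-1}$ of the reflection-symmetric adjacency operator $A$, which is reflection positive, and the Euclidean interaction $\|z\|_2^{-s}$ then arises as the mesh-refinement limit of such couplings, positive semidefiniteness surviving the limit.

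Finally, applying Lemma~\ref{lem RP} gives $\bigl(\Tr\e{W(v_1,v_2)}\bigr)^2\le \Tr\e{W(v_1,\caR v_1)}\cdot\Tr\e{W(\caR v_2,v_2)}$. That the two factors are the reflected partition functions uses the covariance $\caR\overline{S_x\scr{i}}=\epsilon_i S_{\caR x}\scr{i}$ and $\caR\overline{b_x}=b_{\caR x}$, together with the lattice-reflection invariance $J\scr{i}_{\caR z}=J\scr{i}_z$: the conjugate-reflected left data carrying field $v_1$ reappears on the right as field $\caR v_1$, and symmetrically for $v_2$. Recalling $\Tr\e{W(v)}=\tilde Z(v)$, this is the reflection bound $\tilde Z(v_1,v_2)^2\le \tilde Z(v_1,\caR v_1)\,\tilde Z(\caR v_2,v_2)$ that drives the chessboard argument showing $\tilde Z$ is maximised at $v\equiv0$. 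I expect the positive-semidefiniteness of the reflected coupling matrices in the long-range cases---above all the control of the mesh limit yielding the genuinely Euclidean interaction---to be the main obstacle.
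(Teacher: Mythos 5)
Your overall architecture coincides with the paper's: complete the square so that the field $v$ couples only locally (your $W(v)$ is precisely the paper's rewriting \eqref{eq:Z'-split} of $\tilde Z(v)$), split into left, right and cross terms, apply Lemma~\ref{lem RP}, and reduce everything to writing the cross term as $\int C_i D_i \,\dd\mu(i)$ with a positive measure --- equivalently, to positive semidefiniteness of the reflected coupling matrices. The sign bookkeeping with $\epsilon_2=-1$, the identification $\caR\overline{S_x\scr{i}}=\epsilon_i S_{\caR x}\scr{i}$, and the reconstruction of $Z(v_1,\caR v_1)$, $Z(\caR v_2,v_2)$ on the right-hand side are all correct, and your spectral-decomposition packaging of the paper's explicit $C_i,D_i$ constructions is a fine equivalent formulation.

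The gaps are in the case-by-case verifications, and they sit exactly where the paper does its real work. First, the lemma concerns the torus Hamiltonian $H^{\textrm{per}}$, i.e.\ the \emph{periodised} couplings $J\scr{i}_{x,{\textrm{per}}}=\sum_z J\scr{i}_{x+\ell z}$, but your Yukawa argument is carried out for the bare couplings on $\bbZ^d$: the claim that the reflected direction contributes a \emph{rank-one} factor $\e{-b(d_x+d_{x'})}$ fails after periodisation, since the interaction can wrap around the torus. The correct statement, which is the content of the paper's computation, is that the periodised one-dimensional kernel restricted to cross pairs is \emph{rank two}, namely $\tfrac1{\e{b\ell}-1}\e{b|x_1|}\e{b|y_1|}+\tfrac{\e{b\ell}}{\e{b\ell}-1}\e{-b|x_1|}\e{-b|y_1|}$ with both coefficients positive, and the periodised kernels in the parallel directions must separately be shown positive definite on the discrete circle (the paper uses Poisson summation to exhibit positive Fourier weights). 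Your Schur/Hadamard strategy survives this correction, but without it the proof does not apply to the lemma as stated. Second, for the random-walk couplings you assert that the resolvent $(1-\tfrac{\lambda}{2d}A)^{-1}$ ``is reflection positive'' --- that is the statement to be proved, not a known input; the paper proves it (periodisation included) by splitting each walk at its middle crossing of the reflection plane, which produces the explicit positive decomposition indexed by the crossing edge and the number of crossings. Third, in the Euclidean case the mesh limit only yields Gaussian-mixture kernels of the form $\int_0^\infty s^{-d/2}\e{-cs}\e{-\frac{d}{2s}\|x\|_2^2}\dd s$; to reach $\|x\|_2^{-s}$ one still needs to superpose over the mass parameter $c$ (the paper integrates against $c^{u-1}\dd c$), a step your sketch omits. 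None of these gaps indicates a wrong approach, but each marks a place where the assertion you rely on is exactly what remains to be established.
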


We first prove the lemma in the case of nearest-neighbour couplings;
we then consider long-range interactions.

\begin{proof}[Proof of Lemma \ref{lem RP fpart} for nearest-neighbour couplings]
We cast $\tilde Z(v_1,v_2)$ in the form of Lemma \ref{lem RP}. Using
\eqref{discrete grad}, we get 
\be\label{eq:Z'-split}
\begin{split}
\tilde Z(v) &= \Tr \exp \beta \biggl\{ \tfrac18 \sum_{x,y} J_{x-y}\scr3 (v_y-v_x)^2 + \sum_{i=1}^3 \sum_{x,y} J_{x-y}\scr{i} S_x\scr{i} S_y\scr{i} \\
&\hspace{6cm} + \sum_{x,y} J_{x-y}\scr3 S_x\scr3 (v_y-v_x) \biggr\} \\
&= \Tr \exp \beta \biggl\{ \sum_{i=1}^2 \sum_{x,y} J_{x-y}\scr{i} S_x\scr{i} S_y\scr{i} + \sum_{x,y} J_{x-y}\scr3 \bigl( S_x\scr3 + \tfrac{v_x}2 \bigr) \bigl( S_y\scr3 + \tfrac{v_y}2 \bigr) \\
&\hspace{6cm} - \widehat J\scr3(0) \sum_x \bigl( S_x\scr3 v_x + \tfrac{v_x^2}4 \bigr) \biggr\}.
\end{split}
\ee
We used $J_x\scr{i} = J_{-x}\scr{i}$. This formula holds for general
couplings and we will also use it in the long-range case (with
$J_{x,{\textrm{per}}}\scr{i}$). 
We now assume that $J_x\scr{i} = 0$ 
except when $\|x\|_1=1$, in which case it equals a constant $J\scr{i}$.
Then  the above expression has the form of Lemma \ref{lem RP} by choosing
\be\label{eq:Z'-split2}
\begin{split}
&A = \beta \sum_{x,y \in \Lambda_{\textrm{left}}} \Bigl[  \sum_{i=1}^2 J_{x-y}\scr{i} S_x\scr{i} S_y\scr{i} + J_{x-y}\scr3 \bigl( S_x\scr3 + \tfrac{v_x}2 \bigr) \bigl( S_y\scr3 + \tfrac{v_y}2 \bigr)  \Bigr] \\
&\hspace{55mm} - \widehat J\scr3(0) \sum_{x \in \Lambda_{\textrm{left}}} \bigl( S_x\scr3 v_x + \tfrac{v_x^2}4 \bigr) \\
&B = \beta \sum_{x,y \in \Lambda_{\textrm{right}}} \Bigl[  \sum_{i=1}^2 J_{x-y}\scr{i} S_x\scr{i} S_y\scr{i} + J_{x-y}\scr3 \bigl( S_x\scr3 + \tfrac{v_x}2 \bigr) \bigl( S_y\scr3 + \tfrac{v_y}2 \bigr)  \Bigr] \\
&\hspace{55mm} - \widehat J\scr3(0) \sum_{x \in \Lambda_{\textrm{right}}} \bigl( S_x\scr3 v_x + \tfrac{v_x^2}4 \bigr) \\
&\int C_i D_i \dd\mu(i) = \beta \sumthree{x \in \Lambda_{\textrm{left}}}{y \in \Lambda_{\textrm{right}}}{\|x-y\|=1} \Bigl[ J\scr1 S_x\scr1 S_y\scr1 - J\scr2 (\ii S_x\scr2) (\ii S_y\scr2) \\
&\hspace{55mm} + J\scr3 \bigl( S_x\scr3 + \tfrac{v_x}2 \bigr) \bigl( S_y\scr3 + \tfrac{v_y}2 \bigr) \Bigr].
\end{split}
\ee
In the usual basis where all $S_x\scr3$ are diagonal, we have
$\overline{S_x\scr1} = S_x\scr1$, $\overline{\ii S_x\scr2} = \ii
S_x\scr2$, $\overline{S_x\scr3} = S_x\scr3$.  Then $\overline A = A$
and $\overline B = B$. We have multiplied $S_x\scr2$ by $\ii$, so
taking the complex conjugate gives the operator back. Then $\overline
{C_i} = C_i$ and $\overline {D_i} = D_i$.  Moreover,  when
$x\in\Lambda_{\textrm{left}}$ and  $y\in\Lambda_{\textrm{right}}$ 
with $\|x-y\|=1$, the  reflection
interchanges $x$ and $y$.
In order to use Lemma \ref{lem RP} the measure $\mu$ needs to be
positive, which is guaranteed by $J\scr1, J\scr3 \geq 0$ and 
$J\scr2 \leq 0$. 
\end{proof}

An important observation is that if certain interactions can be cast
in the form above, then this can also be done with convex combinations
of these interactions. We use this property below.

\begin{proof}[Proof of Lemma \ref{lem RP fpart} for long-range couplings]
We now consider the case  when 
$J_x\scr{i}=\int_{\RR^d} \dd\nu\scr{i}(k)\e{ik\cdot x}$
where $\nu\scr{i}$ is a positive, finite measure on $\RR^d$.
We see from \eqref{eq:Z'-split} that it suffices to consider a fixed
$i\in\{1,2,3\}$ and 
to simplify the notation we
dispense with the superscript $\scr i$.
We use the decomposition \eqref{eq:Z'-split2} but with
$J_{x,\textrm{per}}$ in place of $J_x$.  
It suffices to consider the cross-term  
\be
\sumtwo{x \in \Lambda_{\textrm{left}}}{y \in \Lambda_{\textrm{right}}} 
 J_{x-y, \textrm{per}} T_x T_y 
\ee
where 
$T_x\in\{S\scr1_x,\ii S\scr2_x,S\scr3_x+\frac{v_x}{2}\}$.
We aim to write this in the form $\int_I C_i D_i\;\dd\mu(i)$ in order to
apply Lemma \ref{lem RP}.
We expand 
\be
 J_{x-y, \textrm{per}} =
\sum_{z\in\ZZ^d}   J_{x-y+\ell z}=
\sum_{z\in\ZZ^d} \int_{\RR^d} \dd\nu(k)
\e{\ii k\cdot(x-y+\ell z)}
\ee
to write 
\be
\sumtwo{x \in \Lambda_{\textrm{left}}}{y \in \Lambda_{\textrm{right}}} 
 J_{x-y, \textrm{per}} T_x T_y =
\sum_{z\in\ZZ^d} \int_{\RR^d} \dd\nu(k)
\Big(\sum_{x\in \Lambda_{\textrm{left}} } \e{\ii k\cdot(x+\ell
  z/2)} T_x\Big)
\Big(\sum_{y\in \Lambda_{\textrm{right}} } \e{-\ii k\cdot(y-\ell z/2)}
T_y\Big).
\ee
This is of the required form $\int_I C_i D_i\;\dd\mu(i)$
with index set $I=\ZZ^d\times\RR^d$, except that we need the measure
$\mu$ to be finite.  In order to achieve this, we may approximate the
sum over $z\in\ZZ^d$ by a sum over $z\in \Lambda'$ and then let 
$\Lambda'\Uparrow\ZZ^d$.  The rest of the argument follows as in the
nearest-neighbour case.
\end{proof}

\begin{corollary}\label{cor GD}
For all $v \in \bbR^{\Lambda_\ell}$, we have
$\tilde Z(v) \leq \tilde Z(0)$.
\end{corollary}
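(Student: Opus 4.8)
The plan is to show that the functional $\tilde Z$ attains its maximum over $\bbR^{\Lambda_\ell}$ at a constant field, and then to observe that a constant field realises the value $\tilde Z(0)$; this gives the claimed bound. The main tool is the reflection bound of Lemma \ref{lem RP fpart}, which in terms of $\tilde Z$ reads
\[
\tilde Z(v_1,v_2)^2 \le \tilde Z(v_1,\caR v_1)\,\tilde Z(\caR v_2,v_2)
\]
for every reflection plane, to be combined with a compactness argument.

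First I would record two elementary properties of $\tilde Z$. A constant field $v\equiv c\mathbf 1$ has $\Delta v=0$, hence produces no magnetic field and satisfies $(v,\Delta v)=0$; consequently $\tilde Z(v+c\mathbf 1)=\tilde Z(v)$, so $\tilde Z$ is invariant under adding constants, and $\tilde Z(c\mathbf 1)=Z(0)=\tilde Z(0)$. Next, on the subspace of mean-zero fields the Dirichlet form obeys $-(v,\Delta v)\ge \lambda_1\|v\|^2$ with $\lambda_1=\min_{k\neq0}\eps(k)>0$ the lowest nonzero eigenvalue of $-\Delta$, so the Gaussian prefactor in \eqref{def Z'} decays like $\e{-\frac14\beta\lambda_1\|v\|^2}$. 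Since the spin operators are bounded, $\|\sum_x h_x S_x\scr3\|\le S\sum_x|h_x|\le C\|v\|$ with $h=\Delta v$, whence $Z(v)\le \e{\beta S\sum_x|h_x|}Z(0)\le\e{C\beta\|v\|}Z(0)$ by Golden--Thompson. Thus $\tilde Z(v)\to0$ as $\|v\|\to\infty$ on the mean-zero subspace, and by continuity $\tilde Z$ attains a global maximum at some mean-zero field $v^\star$.

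Then I would exploit maximality. For any reflection plane the bound above gives $\tilde Z(v^\star)^2\le \tilde Z(\theta_1 v^\star)\,\tilde Z(\theta_2 v^\star)$, where $\theta_1 v^\star=(v^\star_{\textrm{left}},\caR v^\star_{\textrm{left}})$ and $\theta_2 v^\star=(\caR v^\star_{\textrm{right}},v^\star_{\textrm{right}})$ are the two fields made symmetric across that plane. As both factors are at most $\tilde Z(v^\star)$, equality holds throughout, so $\theta_1 v^\star$ and $\theta_2 v^\star$ are again maximizers, now invariant under the chosen reflection. I would iterate this over all edge-planes in all $d$ coordinate directions. Reflections across two parallel edge-planes compose to a lattice translation, and together with the reflections these operations generate enough symmetry (here the assumption that $\ell$ is even enters) to transform $v^\star$, without ever leaving the set of maximizers, into a field that is constant along every direction, i.e. constant. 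Such a field has $\tilde Z=\tilde Z(0)$, so $\tilde Z(v^\star)=\tilde Z(0)$ and therefore $\tilde Z(v)\le\tilde Z(0)$ for all $v$.

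I expect the main obstacle to be this final step, namely passing from ``every reflection of a maximizer is again a maximizer'' to ``some maximizer is constant''. One must verify that the group generated by the edge-reflections acts transitively enough on the layers of the torus that repeated symmetrisation genuinely forces constancy, rather than only some partial periodicity; this is the combinatorial heart of the Gaussian-domination argument. A secondary point requiring care is the zero mode: $\tilde Z$ is coercive only after quotienting out the constant fields, which is why the maximum has to be sought on the mean-zero subspace, and one should also confirm that the reflection bound is used in its $\tilde Z$ form, since the Gaussian prefactors do not simply cancel between the two sides.
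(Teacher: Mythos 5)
Your proposal is correct and follows essentially the same route as the paper: a compactness argument gives a maximiser of $\tilde Z$, the reflection bound (indeed used in its $\tilde Z$ form, exactly as you note it must be) shows that the two symmetrised copies of a maximiser are again maximisers, and iteration over edge-planes in all directions forces a trivial maximiser. The paper resolves the combinatorial step you flag by normalising $v_0=0$ rather than working with mean-zero fields: each reflection is chosen so that the half containing the established zeros is copied onto the other half, so the set of sites where the maximiser vanishes grows at every step (first a line, then a plane, and so on, using that $\ell$ is even), and after finitely many reflections the zero field itself is a maximiser, with no need to analyse the group generated by the reflections.
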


\bfig
\includegraphics[width=120mm]{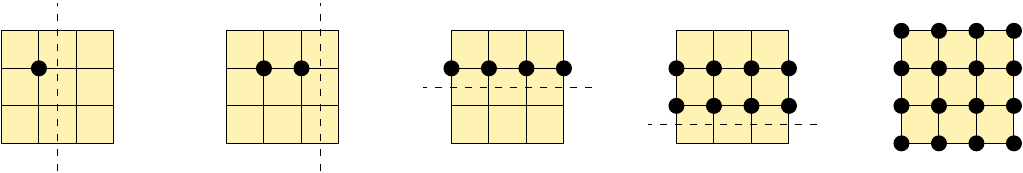}
\caption{Starting with a maximiser, reflections yield further maximisers where more and more values are identical.}
\label{fig rp maximize}
\efig

\begin{proof}
Without loss of generality we can assume that $v_0=0$. We observe that $\tilde Z(\lambda v) \to 0$ as $|\lambda| \to \infty$, so that $\tilde Z(v)$ is maximised for finite $v$. Indeed, in the expression \eqref{def Z'} we have $\e{\frac14 \beta \lambda^2 (v,\Delta v)} \sim \e{-c\lambda^2}$ and $Z(\lambda v) \leq \e{C |\lambda|}$.

Then let $(v_1,v_2)$ be a maximiser with $v_0=0$. Using Lemma \ref{lem RP fpart} with a plane crossing the edge $(0,e_1)$, we have that $(v_1,\caR v_1)$ is also a maximiser, with $v_0 = v_{e_1}=0$. Using a plane crossing the edge $(e_1,2e_1)$, we get a maximiser with more zeros. Iterating, we get a maximiser with a whole line of zeros. We then consider reflection planes in another direction to get a maximiser with a plane of zeros. We then consider reflection planes in further directions. See Fig.\ \ref{fig rp maximize} for an illustration.
\end{proof}

\begin{proof}[Proof of Lemma \ref{lem IRB Duhamel}]
From Corollary \ref{cor GD} and Eq.\ \eqref{def Z'}, we have the "Gaussian domination" bound
\be
\label{GD}
\frac{Z(s v)}{Z(0)} \leq \e{-\frac14 s^2 \beta (v,\Delta v)}.
\ee
The derivative of $Z(sv)$ with respect to $s$ is equal to 0 at $s=0$
because of symmetries (for instance, a rotation around the 3rd spin
axis by angle $\pi$, which takes $S_x\scr{i}$ to $-S_x\scr{i}$,
$i=1,2$, and leaves $S_x\scr3$ invariant). The second derivative can
be calculated e.g.\ using the Duhamel formula \eqref{def Duhamel} and
translation-invariance.  
Recalling the Duhamel correlation function $\eta$ from 
\eqref{def Duhamel corr}, we get 
\be
\frac1{Z(0)} \frac{\dd^2}{\dd s^2} Z(sv) \bigg|_{s=0} = 
\beta^2 \sum_{x,y \in \Lambda} h_x h_y \eta(x-y),
\ee
where we recall that $h_x=(\Delta v)_x$.  
We now choose the field $v$ to be
\be
v_x =  \cos(kx), \qquad  k \in \Lambda_\ell^*.
\ee
Observe that $\Delta v_x = -\eps(k) v_x$. The order $s^2$ of the inequality \eqref{GD} gives
\be
\label{GD lowest order}
\tfrac12 \beta^2 \eps(k)^2 \sum_{x,y \in \Lambda_\ell} \cos(kx) \cos(ky) \eta(x-y) \leq \tfrac14 \beta \eps(k) \sum_{x\in\Lambda_\ell} \cos(kx)^2.
\ee
Since $\eta(x)$ and $\widehat\eta(k)$ are both real, the
left-hand-side satisfies 
\be
\begin{split}
\sum_{x,y \in \Lambda_\ell} \cos(kx) \cos(ky) \eta(x-y) &= 
\sum_{x\in\Lambda_\ell} \cos(kx) \sum_{y\in\Lambda_\ell} 
\e{\ii ky} \eta(x-y) \\
&= \sum_{x\in\Lambda_\ell} \cos(kx) 
\sum_{z\in\Lambda_\ell} \e{\ii k(x-z)} \eta(z) \\
&= \sum_{x\in\Lambda_\ell} \cos(kx) \e{\ii kx} \widehat\eta(k) \\
&= \widehat\eta(k) \sum_{x\in\Lambda_\ell} \cos(kx)^2.
\end{split}
\ee
Inserting this in Eq.\ \eqref{GD lowest order} we obtain Lemma \ref{lem IRB Duhamel}.
\end{proof}

%\appendix
%\renewcommand{\thesection}{\Alph{section}}
\renewcommand{\thesection}{Appendix A}

%\bigskip
%\centerline{\Large\textsc{Appendix}}

\section{Correlation inequalities for quantum systems}
%\label{sec app}

\renewcommand{\thesection}{A}

We needed inequalities on correlation functions in different spin directions. Such inequalities go back at least to \cite{KK}. The present lemma appeared in this form in \cite{FU}. It also holds with periodic boundary conditions.

\begin{lemma}
\label{lem corr ineq}
Assume that, for all $x,y \in \Lambda$, the coupling constants satisfy
\[
|J_{x-y}\scr2| \leq J_{x-y}\scr1.
\]
Then we have that
\[
\bigl| \langle S_{0}\scr2 S_{x}\scr2 \rangle_{\Lambda,\beta,h} \bigr| \leq \langle S_{0}\scr1 S_{x}\scr1 \rangle_{\Lambda,\beta,h} ,
\]
for all $x \in \Lambda$.
\end{lemma}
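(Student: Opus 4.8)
The plan is to expand the weighted traces $\Tr[S_0\scr{i}S_x\scr{i}\e{-\beta H_{\Lambda,h}}]$, for $i=1,2$, in the orthonormal basis $\{|\sigma\rangle\}_{\sigma\in\Sigma^\Lambda}$ that diagonalises all the operators $S_z\scr3$, and to compare them term by term. Split the Hamiltonian as $H_{\Lambda,h}=H_{\mathrm{diag}}+V$, where $H_{\mathrm{diag}}=-\sum_{x,y}J_{x-y}\scr3 S_x\scr3 S_y\scr3-h\sum_x S_x\scr3$ is diagonal in this basis and $V=-\sum_{i=1,2}\sum_{x,y}J_{x-y}\scr{i}S_x\scr{i}S_y\scr{i}$ is off-diagonal. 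Using the Duhamel expansion \eqref{def Duhamel} in $V$ and inserting a resolution of the identity $\sum_\sigma|\sigma\rangle\langle\sigma|$ between consecutive factors, I would write each trace as a sum over ``world-line'' configurations $\omega$: a piecewise-constant trajectory of the classical configuration on $[0,\beta]$, together with a finite collection of bond events, each specifying a time, a pair $(a,b)$, and a direction $i\in\{1,2\}$ at which $S_a\scr{i}S_b\scr{i}$ acts and flips the spins at $a$ and $b$ by $\pm1$. The two external operators $S_0\scr{i}$ and $S_x\scr{i}$ (at time $0$) contribute two further single-site flips, at $0$ and at $x$.

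The reason for working in this basis is that matrix elements factorise over sites and have the same modulus in the two spin directions: for a single flip one has $\langle m\pm1|S\scr1|m\rangle\ge0$ and $\langle m\pm1|S\scr2|m\rangle=\mp\ii\,\langle m\pm1|S\scr1|m\rangle$, so $|\langle m\pm1|S\scr2|m\rangle|=\langle m\pm1|S\scr1|m\rangle$. I can therefore write the weight of a configuration as $M_0(\omega)$ times a product of phases, where $M_0(\omega)\ge0$ collects the positive diagonal weights $\e{-\int H_{\mathrm{diag}}}$ and all the flip moduli, while the phases are $+1$ for each $S\scr1$-flip and $\pm\ii$ for each $S\scr2$-flip (the sign recording whether the flip raises or lowers the spin). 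Crucially, the diagonal energies and the flip moduli depend only on the trajectory and the event locations, not on the directions $i$ nor on whether the external operators sit in direction $1$ or $2$; hence $M_0(\omega)$ is common to the two correlations.

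The key step is to fix the geometric \emph{skeleton} — the trajectory together with the times and locations of all events and of the two external flips — and then to sum over the directions $i\in\{1,2\}$ of the internal bond events only. A type-$1$ event contributes $J_{a-b}\scr1$ with phase $+1$, while a type-$2$ event contributes $J_{a-b}\scr2$ times $\epsilon_e:=(\text{phase at }a)(\text{phase at }b)\in\{-1,+1\}$, the sign being $-1$ when the two flips go the same way and $+1$ otherwise. Since the per-event weight factorises, the direction sum yields, for each event, the factor $J_{a-b}\scr1+\epsilon_e J_{a-b}\scr2\in\{J_{a-b}\scr1+J_{a-b}\scr2,\;J_{a-b}\scr1-J_{a-b}\scr2\}$. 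This is exactly where the hypothesis $|J_{x-y}\scr2|\le J_{x-y}\scr1$ enters: it forces $J_{a-b}\scr1\pm J_{a-b}\scr2\ge0$, so after the direction sum each skeleton carries the non-negative weight $M_0(\omega)\prod_e\bigl(J_{a-b}\scr1+\epsilon_e J_{a-b}\scr2\bigr)\ge0$.

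Summing over skeletons then gives $\langle S_0\scr1 S_x\scr1\rangle=\tfrac1Z\sum_{\mathrm{skel}}(\ge 0)$, a sum of non-negative terms, whereas for $i=2$ the only change is an external phase $\Phi^{\mathrm{ext}}(\omega)=(\pm\ii)_0(\pm\ii)_x$ attached to each skeleton, of modulus $1$. Hence $\langle S_0\scr2 S_x\scr2\rangle=\tfrac1Z\sum_{\mathrm{skel}}(\ge0)\cdot\Phi^{\mathrm{ext}}(\omega)$, and the triangle inequality gives $|\langle S_0\scr2 S_x\scr2\rangle|\le\tfrac1Z\sum_{\mathrm{skel}}(\ge0)=\langle S_0\scr1 S_x\scr1\rangle$, as claimed. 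I expect the main difficulty to be organisational rather than analytic: one must set up the expansion so that the two correlations differ \emph{only} through the modulus-one external phase, and carefully verify the per-event bookkeeping — in particular the identity $|\langle m\pm1|S\scr2|m\rangle|=\langle m\pm1|S\scr1|m\rangle$ and the factorisation of the direction sum into $\prod_e(J_{a-b}\scr1+\epsilon_e J_{a-b}\scr2)$. The analytic input is minimal, since in finite volume all operators are bounded and the Duhamel series converges absolutely; the periodic case is identical with $J\scr{i}$ replaced by $J_{\mathrm{per}}\scr{i}$, and the coupling $J\scr3$ and field $h$ play no role beyond the (positive) diagonal weights, consistent with the absence of any assumption on $J\scr3$.
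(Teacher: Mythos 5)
Your proposal is correct and is essentially the paper's own proof: the paper performs the same $S\scr3$-basis expansion via the Trotter formula (rather than your continuous-time Duhamel world lines), uses the identity $J\scr1 S_y\scr1 S_z\scr1 + J\scr2 S_y\scr2 S_z\scr2 = \tfrac14(J\scr1\!-\!J\scr2)(S_y\scr{+}S_z\scr{+}+S_y\scr{-}S_z\scr{-}) + \tfrac14(J\scr1\!+\!J\scr2)(S_y\scr{+}S_z\scr{-}+S_y\scr{-}S_z\scr{+})$ --- which is precisely your per-event factor $J\scr1+\epsilon_e J\scr2\geq0$ --- to make all internal matrix elements nonnegative, and then bounds the external insertion by $|\langle\sigma_0|S_0\scr2 S_x\scr2|\sigma_1\rangle|\leq\langle\sigma_0|S_0\scr1 S_x\scr1|\sigma_1\rangle$, which is your modulus-one external phase combined with the triangle inequality. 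The only difference, discrete Trotter slicing versus your Duhamel expansion grouped by skeletons, is cosmetic.
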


\begin{proof}
Let $|a\rangle$, $a \in \{-S,\dots,S\}$ denote basis elements of $\bbC^{2S+1}$. Let the operators $S\scr{\pm}$ be defined by
\be
\begin{split}
&S\scr{+}|a\rangle = \sqrt{S(S+1) - a(a+1)} \; |a+1\rangle, \\
&S\scr{-}|a\rangle = \sqrt{S(S+1) - (a-1)a} \; |a-1\rangle,
\end{split}
\ee
with the understanding that $S\scr{+} |S\rangle = S\scr{-} |-S\rangle = 0$. Then let $S\scr1 = \frac12 (S\scr{+}+S\scr{-})$, $S\scr2 = \frac1{2\ii} (S\scr{+}-S\scr{-})$, and $S\scr3 |a\rangle = a |a\rangle$. It is well-known that these operators satisfy the spin commutation relations. Further, the matrix elements of $S\scr1, S\scr{\pm}$ are all nonnegative, and the matrix elements of $S\scr2$ are all less than or equal to those of $S\scr1$ in absolute values. Using the Trotter formula and multiple resolutions of the identity, we have
\be
\label{Trotter estimate}
\begin{split}
&\bigl| \Tr S_{0}\scr2 S_{x}\scr2 \e{-\beta H_{\Lambda,0}} \bigr| \leq \lim_{N\to\infty} \sum_{\sigma_{0}, \dots, \sigma_{N} \in \{-S,\dots,S\}^{\Lambda}} \biggl| \langle \sigma_{0} | S_{0}\scr2 S_{x}\scr2 | \sigma_{1} \rangle \\
&\langle \sigma_{1} | \e{\frac\beta N \sum J_{y-z}\scr3 S_{y}\scr3 S_{z}\scr3} | \sigma_{1} \rangle \langle \sigma_{1} | \Bigl( 1 \! + \! \tfrac\beta N \sum_{y,z \in \Lambda} ( J_{y-z}\scr1 S_{y}\scr1 S_{z}\scr1 + J_{y-z}\scr2 S_{y}\scr2 S_{z}\scr2 ) \Bigr) | \sigma_{2} \rangle \\
\dots &\langle \sigma_{N} | \e{\frac\beta N \sum J_{y-z}\scr3 S_{y}\scr3 S_{z}\scr3} \! | \sigma_{N} \rangle \langle \sigma_{N} | \Bigl( 1 \! + \! \tfrac\beta N \!\! \sum_{y,z \in \Lambda} \! ( J_{y-z}\scr1 S_{y}\scr1 S_{z}\scr1 \! + \! J_{y-z}\scr2 S_{y}\scr2 S_{z}\scr2 ) \Bigr) | \sigma_{0} \rangle \biggr|.
\end{split}
\ee
Observe that the matrix elements of all operators are nonnegative, except for $S_{0}\scr2 S_{x}\scr2$. Indeed, this follows from
\be
\begin{split}
&J_{y-z}\scr1 S_{y}\scr1 S_{z}\scr1 + J_{y-z}\scr2 S_{y}\scr2 S_{z}\scr2 \\
&= \tfrac14 (J_{y-z}\scr1 \! - \! J_{y-z}\scr2) (S_{y}\scr{+} S_{z}\scr{+} \! + \! S_{y}\scr{-}S_{z}\scr{-}) + \tfrac14 (J_{y-z}\scr1 \! + \! J_{y-z}\scr2) (S_{y}\scr{+} S_{z}\scr{-} \! + \! S_{y}\scr{-} S_{z}\scr{+}) .
\end{split}
\ee
We get an upper bound for the right side of \eqref{Trotter estimate} by replacing $|\langle \sigma_{0}| S_{0}\scr2 S_{x}\scr2 |\sigma_{1}\rangle|$ with $\langle\sigma_{0}| S_{0}\scr1 S_{x}\scr1 |\sigma_{1}\rangle$. We have obtained
\be
\bigl| \Tr S_{0}\scr2 S_{x}\scr2 \e{-\beta H_{\Lambda,0}} \bigr| \leq \Tr S_{0}\scr1 S_{x}\scr1 \e{-\beta H_{\Lambda,0}},
\ee
which proves the claim. We actually set $h=0$ in order to shorten the equations, but adding terms involving the $S_x\scr3$ operators to the Hamiltonian is straightforward.
\end{proof}

%------
% Insert acknowledgments and information
% regarding funding at the end of the last
% section, i.e., right before the bibliography.
%------

\begin{ack}
The authors are grateful to Robert Seiringer for encouragements and helpful comments on a draft. We also thank J\"urg Fr\"ohlich, Bruno Nachtergaele, Lorenzo Taggi, Balint T\'oth, and Yvan Velenik, for useful comments.
\end{ack}

%\begin{funding}
%This work was partially supported by~\ldots
%\end{funding}

%------
% Insert the bibliography.
%------

\end{document}